\def\del{\partial}
\def\rC{\mathrm{C}}
\def\fgl{{\mathfrak{gl}}}
\DeclareMathOperator{\Proj}{Proj}
\DeclareMathOperator{\aut}{aut}
\def\A{\ensuremath{\mathbb{A}}}
\def\cat#1{\ensuremath{\text{\textsf{#1}}}}
\DeclareMathOperator{\Mod}{\cat{Mod}}
\DeclareMathOperator{\Dol}{Dol}
\def\PV{{\rm PV}}
\begin{document}

\title[Twisting pure spinor superfields]{Twisting pure spinor superfields, \\ with applications to supergravity}

\author{Ingmar Saberi}
\address{Ludwig-Maximilians-Universit\"at M\"unchen \\ Fakult\"at f\"ur Physik \\ Theresienstra\ss{}e 37 \\ 80333 M\"unchen \\ Deutschland}
\email{i.saberi@physik.uni-muenchen.de}

\author{Brian R. Williams}
\address{School of Mathematics\\
University of Edinburgh \\ 
Edinburgh \\ 
UK}
\email{brian.williams@ed.ac.uk}

\date{June 28, 2021}

\begin{abstract}
We study twists of supergravity theories and supersymmetric field theories, using a version of the pure spinor superfield formalism. Our results show that, just as the component fields of supersymmetric multiplets are the vector bundles associated to the equivariant Koszul homology of the variety of square-zero elements in the supersymmetry algebra, the component fields of the holomorphic twists of the corresponding multiplets are the holomorphic vector bundles associated to the equivariant Koszul homology of square-zero elements in the twisted supersymmetry algebra. The BRST or BV differentials of the free multiplet are induced by the brackets of the corresponding super Lie algebra in each case. We make this precise in a variety of examples; applications include rigorous computations of the minimal twists of eleven-dimensional and type IIB supergravity, in the free perturbative limit. The latter result proves a conjecture by Costello and Li, relating the IIB multiplet directly to a presymplectic BV version of minimal BCOV theory.
\end{abstract}

\maketitle

\newpage

\tableofcontents
\smallskip
\centerline{\em To Martin Cederwall}
\smallskip

\section{Introduction}

Twists of supersymmetric gauge theories have been a rich source of new ideas and results in  both physics and mathematics over the last few decades. The influence of topological quantum field theory in low-dimensional topology can hardly be overstated; twisting can be used to extract such topological field theories from supersymmetric theories of physical interest, as in Witten's approach  to Donaldson theory for four-manifolds. In turn, expectations about the original supersymmetric theory (or different approaches to understanding its behavior, for example via dualities) can produce powerful and surprising new insights about the invariants captured by its twists; Seiberg--Witten theory is a noted example of this kind.

From the mathematician's perspective, there are many questions and goals that arise immediately from this line of thinking. For a start, one could hope to understand all possible twists of a large class of supersymmetric theories, so as to explore as many different mathematical applications as possible. One might also ask to understand all of the different dualities that have been proposed in the physics literature, and look for mathematical incarnations or manifestations of each of them. And one might hope to formalize the construction of the (untwisted) supersymmetric theories themselves, so as to better understand the spectrum of possibilities there, as well as the logic that relates various twists and duality frames to one another. 

Among dualities of physical interest, the AdS/CFT correspondence remains relatively open territory in terms of potential mathematical applications. This is likely due, at least in part, to the fact that it involves theories of gravity, which are harder to deal with than standard field theories; even the notion of twisting requires care in this setting. There has, however, been notable progress in this direction over the last few years. Costello and Li proposed a natural definition of twisting for supergravity theories, and gave a conjectural description of the holomorphic twist of type IIB supergravity theory, motivated by ideas from topological string theory. Using their description, they showed relations between the large-$N$ limit of (holomorphically twisted) brane worldvolume theories and their holomorphic closed string field theory. Further work on a mathematical approach to holography via Koszul duality can be found in~\cite{CostelloPaquette, CostelloMtheory1}.  

Costello and Li did not derive their description of holomorphically twisted supergravity directly from a target-space description of the supergravity multiplet; one imagines that this was perhaps due to the difficulty of giving a complete formulation of the homotopy module structure in a component-field BV description. One central result of this paper is a proof of Costello and Li's conjecture: we directly compute the holomorphic twists of both type IIB and eleven-dimensional supergravity, at the level of free (degenerate) BV theories. 

The key technical tool in our computation is a version of the pure spinor superfield formalism. These techniques have been developed for some time in the physics literature, notably in Berkovits' work on the pure spinor superstring and in the work of Cederwall and collaborators~\cite{BerkovitsCovariant, Cederwall}. In particular, Cederwall gave a formulation of perturbative eleven-dimensional supergravity using a pure spinor superfield~\cite{Ced-towards,Ced-11d}, building on earlier work by Howe and Berkovits~\cite{Howe11d,Berkovits11d}. The formalism was reinterpreted in~\cite{NV}, where a description of the type IIB supergravity multiplet was presented\footnote{The construction of this multiplet was also known to Martin Cederwall in unpublished work.} and the close connection of the procedure to twisting was also observed. The relevant spaces are not always spaces of  pure spinors in the sense of Cartan; rather, they are the spaces $Y$ of square-zero elements in (the odd part of) the supersymmetry algebra. As such, the same object appears in two seemingly different roles: on the one hand, the twists of any supersymmetric theory fit together into a natural family over~$Y$; on the other hand, an \emph{untwisted} supersymmetric theory can be constructed in natural fashion from the datum of an equivariant sheaf over~$Y$. 

From the mathematical perspective, it is helpful to think about this procedure as providing a natural free resolution with several desirable properties. The essential object of study in a classical supersymmetric field theory is the sheaf of solutions to equations of motion modulo gauge transformations; this sheaf by definition, is equipped with an action of the supersymmetry algebra. The BRST and BV formalisms provide free resolutions of the quotient by gauge symmetries and the equations of motion, respectively; the action of supersymmetry is then lifted to a homotopy action on the space of BRST or BV fields. The pure spinor superfield construction builds a much larger free resolution of the same sheaf, which is in fact free over a supermanifold. Correspondingly, the action of the supersymmetry algebra is strict. In many examples, the resolution is also multiplicative. 

It is thus natural to expect that twisting a multiplet that is constructed using the pure spinor superfield technique will be both easier than working with the usual homotopy action on component fields, and will also relate the twisted theory to the geometry of the nilpotence variety in a neighborhood of the twisting supercharge. Both expectations turn out to be correct. As such, we see the techniques developed in this paper as interesting and worthwhile in their own right, and include numerous examples of twist computations that were known previously, to illustrate the extent to which our techniques streamline the computations and to give a new geometric perspective. We rely heavily on a version of the pure spinor superfield formalism which is tailored to our needs; a full exposition at the untwisted level will appear in the forthcoming paper~\cite{EHSW}.

In particular, it is easy to sum up all of our twisting results as follows: The component fields of a multiplet in the pure spinor formalism consist of the associated bundle to the $\Spin(d)$-representation on the Koszul homology of (the corresponding equivariant sheaf over) the nilpotence variety $Y$. To study the holomorphic twist, we consider the cohomology of the super-Poincar\'e algebra with respect to the differential $[Q,-]$. The component fields of the twisted theory consist of the associated \emph{holomorphic} bundle to the $\U(n)$-representation on the Koszul homology of the nilpotence variety~$Y_Q$ of this twisted super-Lie algebra. The BRST or BV differentials that are present are induced, at the cochain level, by a further term in the differential that involves the structure constants of the relevant Lie algebra (either twisted or untwisted).

\subsection*{Note added}
We thank R.~Eager and F.~Hahner for coordinating the submission of their related paper~\cite{EagerHahner}. In that work, the authors compute the \emph{maximal} twist of free perturbative eleven-dimensional supergravity, starting with a component-field description that draws on the work of~\cite{Ced-11d}. The form of the maximal twist was conjectured by Costello in~\cite{CostelloMtheory2}, and has been derived from the holomorphic twist presented here in unpublished work by Surya Raghavendran.

\subsection*{Acknowledgements}

Special thanks are due to R.~Eager, F.~Hahner, and J.~Walcher, as well as to S.~Raghavendran, for close collaboration on related topics, both past, present, and future.
We would also like to acknowledge K.~Costello and Si Li's groundbreaking work on string field theories associated to topological string theories~\cite{CLsugra, CLbcov1,CLtypeI}.
This work led to the conjectural descriptions of twists of ten-dimensional supergravity to which we compare in the last section of this paper. 
We further thank I.~Brunner, D.~Butson, M.~Cederwall, T.~Dimofte, C.~Elliott, O.~Gwilliam, J.~Huerta, J.~Palmkvist, N.~Paquette, P.~Safronov, P.~Yoo for conversation and inspiration of all kinds. 
The work of I.S.\ is supported by the Free State of Bavaria.
The work of B.W.\ is supported by the University of Edinburgh. 

\subsection*{Notation}
We will write $\lie{g}$ either for a super Lie algebra or a (differential) $\Z$-graded Lie algebra, according to context. For super Lie algebras, $\lie{g}_\pm$ will refer to the odd and even elements; we write $\lie{g}_{\geq 0}$ for the positively graded subalgebra of a $\Z$-graded algebra.

\section{Background}
\subsection{Twisting}\label{sec:twisting}
We briefly recall the notion of a \emph{twist} of a representation of a supersymmetry algebra, mostly to fix notation. 
For a more complete introduction to this topic, the reader is referred to the literature, we most closely follow \cite{CostelloHolomorphic,ESW}.

A field theory on affine space admits an action of the affine (or ``Poincar\'e'') group. 
A theory is \emph{supersymmetric} when it admits an action of a super Lie algebra $\lie{p} = \lie{p}_+ \oplus \Pi \lie{p}_-$ containing an affine subalgebra, which extends the standard action of that subalgebra. 
The construction of the relevant superalgebras will be reviewed below. 

For us, a \emph{supermultiplet} on affine space is a $\Z$-graded chain complex of $\Z/2\Z$-graded (super) vector bundles, equipped with a homotopy $\lie{p}$-module structure. 
We think of this as a free resolution or derived replacement of some (non-locally-free) $\Z/2\Z$-graded sheaf, on which the supersymmetry algebra acts. 
Note that we do not specify \emph{which} physical object this sheaf plays: it may consist of all field configurations; the space of gauge-equivalence classes of field configurations; or the space of solutions to the equations of motion, considered up to gauge symmetry. 
In the last case, one normally says that the supersymmetry acts ``on-shell,'' as contrasted with the ``off-shell'' action of the other cases. 

Given a supermultiplet $E$ (which is a $\ZZ \times \ZZ/2$ graded object) and an odd element $Q \in \lie{p}_-$ such that $[Q,Q]= 0$, the \emph{twist} $E_Q$ of~$E$ by~$Q$ is defined to be the family 
\[
E[u] = E \otimes_\CC \CC[u] .
\]
over the line $\Spec(\CC[u])$, equipped with the differential $(\d_E + uQ)$.
We equip the polynomial algebra $\CC[u]$ with the structure of a $\ZZ$-graded super algebra where $u$ has bidegree $(1,-)$. 
In turn, $E_Q$ inherits the structure of a $\ZZ \times \ZZ/2$ graded cochain complex. 

One can often regrade using a \emph{twisting datum} to place $u$ in degree $(0,+)$, so that it makes sense to specialize $u = 1$ and view the fiber of this family itself as a $\Z\times \Z/2\Z$-graded chain complex with differential $(d_E + Q)$. For details on this, we refer to the literature~\cite{CostelloHolomorphic,ESW}. By construction, the twisted  multiplet~$E_Q$ is equipped with a homotopy action of the ``twisted'' differential superalgebra 
\deq{
\lie{p}_Q = (\lie{p}[u],u\ad_Q).
}
\begin{rmk}
\label{rmk:loc}
The algebraic family $E_Q$ over the formal disk contains the untwisted multiplet $E$ as the fiber over zero. In the physics literature, the ``twist'' more properly refers to the $Q$-deformed object that is the generic fiber of this family. As such, in considering the twisted multiplet, we will frequently localize to the family $(u)^{-1} E_Q$ over the \emph{punctured} disk $\Spec \C[u,u^{-1}]$. While the reader should remain attentive, no confusion should arise.
\end{rmk}

\subsection{The nilpotence variety}
For any super Lie algebra $\fg = \fg_+ \oplus \Pi \fg_-$, we consider the space $Y$ of odd elements whose self-bracket is zero:
\deq{
Y = \{Q \in \lie{g}_- \; | \;  [Q,Q] = 0 \}.
}
This space is naturally an algebraic variety: for us, it will appear only through its (graded) ring $\O_Y$ of global functions, which is a quotient of the polynomial ring on $\lie{g}_-^\vee$ by an ideal generated by homogeneous quadratic equations. 
Thus $Y$ is the affine scheme $\Spec \O_Y$; we may occasionally make reference to $\Proj \O_Y$, but it should be clear from context when this is intended.

A few comments are in order. 
Firstly, $Y$ is closely related to the space of Maurer--Cartan elements of a $\ZZ$-graded Lie algebra. 
If $\lie{g}$ admits a lift of the parity to a $\Z$-grading with support in degrees 0, 1, and 2, then $Y$ is precisely the space of Maurer--Cartan elements of this lift. 
Note, though, that we do not take the quotient by gauge transformations, instead choosing to remember that $\lie{g}_0$ acts by vector fields on~$Y$ (we make this precise momentarily). 

This condition that $\fg$ lifts to a $\ZZ$ graded Lie algebra concentrated in degrees $0,1,2$ may appear arbitrary; it will apply, however, to all supertranslation algebras, and thus to every example we will study in this paper. 
In general, though, it may not be satisfied, even if $\lie{g}$ can be lifted to a $\ZZ_{\geq 0}$ graded Lie algebra. $Y$ is then not precisely the space of Maurer--Cartan elements, but it does naturally embed in $Y$ in a way which is equivariant for the $\lie{g}_0$ action.

Secondly, it should be apparent that, if $\lie{g} = \lie{p}$ is a supersymmetry algebra, the space $Y$ precisely classifies the possible twists of a multiplet with $\lie{p}$-supersymmetry, in the sense defined above.

As such, the possible twists of a $\lie{p}$-multiplet (indeed, any $\lie{p}$-module) form a natural family over~$Y$, and it is natural to ask about performing the twisting construction in families. 
It turns out that this is the essential idea behind the ``pure spinor superfield'' approach to constructing (untwisted!) $\lie{p}$-multiplets; as we show in the sequel, the structure of such multiplets, as well as the computation of their twists, are related intimately to the algebraic geometry of~$Y$.

\subsection{Pure spinor superfields}

We now give a quick overview of the pure spinor superfield construction. 
This technique has been developed for quite some time in the physics literature, notably in work of Cederwall et. al.~\cite{Cederwall,CederwallM5}, Berkovits~\cite{BerkovitsCovariant,Berkovits1,BerkovitsNM}, Movshev and Schwarz \cite{MovshevSchwarz,MovshevSchwarz2}, and references therein. 
More recently, it was reinterpreted and generalized in~\cite{NV}. 
For a complete and modern treatment in sheaf-theoretic language, we refer to the forthcoming work~\cite{EHSW}; here, we just briefly summarize those things that are needed for our purposes.

From this point forward, we restrict to the following setting: We let $T$ be a super Lie group with super Lie algebra $\lie{t}$, and $T_+$ the underlying (even) Lie group whose Lie algebra is~$\lie{t}_+$. The space $C^\infty(T)$ of smooth functions on~$T$ has two commuting actions of $\lie{t}$ by vector fields, on the left and on the right.
(The letter refers to ``(super)translations,'' although the construction is still somewhat more general at this point.) We further assume that $\lie{t}$ admits a strictly positive $\Z$-graded lift. Let $\lie{p}$ be the nonnegatively graded Lie algebra obtained by extending $\lie{t}$ by its automorphisms in degree zero; thus $\lie{p}_0 = \aut(\lie{t})$ and $\lie{p}_{>0} = \lie{t}$. ($\lie{t}$ clearly has no degree-preserving inner automorphisms.)

We will work in the context of (differential) algebras that are graded-commutative with respect to a bigrading by $\Z\times \Z/2\Z$. Such objects will be called \emph{commutative (differential) graded superalgebras}, or c(d)gsa's. The differential of a cdgsa is required to be a square-zero derivation of degree $(1,+)$. 
In the case that the algebra is given by the sections of a $\ZZ \times \Z/2 \Z$ graded vector bundle on a manifold where the differential and product are differential and bi-differential operators, respectively, we refer to this as a {\em local} cdgsa.
We begin by exhibiting a cgsa structure on the ingredients of the construction.

The $\Z/2$-graded algebra $C^\infty(T)$ of smooth functions on the super Lie group can be presented as
\deq{
C^\infty(T) \cong C^\infty(T_+) \otimes \wedge^\bu\left(\lie{t}_-^\vee\right) .
}
It acquires the structure of a cgsa in a trivial way, by viewing the generators $\lie{t}_-^*$ of the exterior algebra as placed in bidegree $(0,-)$.
Via this presentation, we see that $C^\infty(T)$ is naturally a filtered cgsa, with the decreasing filtration $F_1$ defined by 
\deq{
F_1^k C^\infty(T) = C^\infty(T_+) \otimes \wedge^{\geq k} \left(\lie{t}_-^\vee\right).
}
(Indeed, this filtration arises from the natural $\ZZ$-graded algebra structure on the exterior algebra; for our purposes, though, it will be natural to use the cgsa structure defined above.)

Since $Y$ is defined by a system of quadratic equations in the affine space $\lie{t}_-$, $\O_Y$ has the structure of a commutative, $\Z$-graded algebra. 
While this is not a graded-commutative algebra structure, $\O_Y$ can be viewed as a cgsa by declaring that the linear functions are of bidegree $(1,-)$. Of course, $\O_Y$ also carries the decreasing filtration associated to its integer grading, which we will call $F_2$.

\subsubsection{The construction}

We consider the tensor product
\deq{
A = C^\infty(T) \otimes_\C \O_Y.
}
Since it is the tensor product of cgsas, $A$ is a cgsa in a natural fashion---either over $\C$, or over the ground ring $C^\infty(T_+)$.

We recall that there are commuting right and left actions of~$\lie{t}$ on~$C^\infty(T)$, given by super Lie algebra maps
\deq{
R, L \colon \lie{t} \to \Vect\left(T\right)
}
where $\Vect(T)$ is the super Lie algebra of super vector fields.
Since $A$ is a module over itself, there is also a map
\deq{
m\colon \lie{t}_-^\vee \hookrightarrow A \rightarrow \End_{A\operatorname{-}\Mod}(A),
}
which is just multiplication by linear generators.

We observe that $A$ is equipped with the following structures:
\begin{enumerate}[label=\emph{\alph*}., labelsep = *, leftmargin = \parindent, labelindent = 0 pt]
\item There is a $\lie{p}_0$ action on~$A$, obtained as the tensor product of the $\lie{p}_0$-module structures on $C^\infty(T)$ and~$\O_Y$.
\item There is a $\lie{t}$-action on $A$, obtained via taking the tensor product of the vector fields $L$ with the identity automorphism of~$\O_Y$. Together, these two data define the structure of a $\lie{p}$-module on~$A$.
\item There is a natural differential $\cD$ of bidegree $(1,-)$ on~$A$, obtained as the image of the natural map
\deq{
1 \mapsto \lie{t}_- \otimes \lie{t}_-^\vee \xrightarrow {R \otimes m} \Der_\C(A).
}
$\cD$ is a derivation, is $\lie{p}_0$-invariant, and squares to zero. It therefore gives $A$ the structure of a cdgsa.
\item There is a natural filtration $F$ on $A$, defined by
\deq{
F^k A = \bigoplus_{i + j = k} F_1^i C^\infty(T) \otimes_\C F_2^j \O_Y.
}
The differential $\cD$ respects this filtration.
\end{enumerate}

In sum, $A$ has the structure of a $\lie{p}$-equivariant filtered cdgsa. We will refer to $A$ as the \emph{tautological (filtered) cdgsa} associated to the data $(T,\lie{t})$. 

Physically speaking, $A$ defines (a large resolution of) a particular multiplet on the affine spacetime $\A(T_+)$ with an action of the super-Poincar\'e algebra $\lie{p}$. This resolution has several crucial properties: It strictifies the homotopy action of $\lie{p}$ on the multiplet, which is almost never strict in the typical (``component-field'') resolutions used in physics. Relatedly, it is a free resolution over the \emph{affine superspace} $\A(T)$, not merely over~$\A(T_+)$. Last but not least, it is a \emph{multiplicative} free resolution, since $A$ has the structure of an algebra over~$C^\infty(T)$; as should be clear, the structure of this multiplicative free resolution is controlled just by the classical algebraic geometry of the affine scheme $\Spec(\O_Y)$. From our perspective, these are the essential structural properties of the pure spinor superfield formalism. In what follows, we will see that twists of multiplets that are constructed in this way are also governed in a pleasing fashion by the algebraic geometry of~$Y$.

\begin{rmk}
As observed in~\cite{NV} and used implicitly in previous literature, this construction can straightforwardly be generalized: 
Let $\cE$ be a graded $\O_Y$-module that is equivariant for the action of~$\lie{p}_0$. (We think of this as the space of global sections of an equivariant sheaf on the affine scheme $Y$.) The tensor product
$A_\cE = C^\infty(T) \otimes_\C \cE$  can similarly be equipped with a tautological differential, and produces a multiplet for $\lie{p}$. However, if $\cE$ does not carry a commutative algebra structure, there is no natural algebra structure on~$A_\cE$. We will not consider examples of this form here; the reader is referred to~\cite{EHSW}, or to the prior literature.
\end{rmk}

\subsubsection{The role of the filtration}
Recall that the \emph{Koszul complex} of the graded ring $\O_Y$ is defined to be the Koszul complex of its maximal ideal:
\deq{
K^\bu(\O_Y) = \bigg( \O_Y \otimes_\C \wedge^\bu(\lie{t}^\vee_-), ~ \d_K \bigg),
}
where $\d_K$ is the Koszul differential induced by the shift morphism from $\lie{t}^\vee_-$ to~$\lie{t}^\vee_-[-1] \subset \O_Y$. Note that the Koszul differential has bidegree $(1,+)$. In coordinates, if $\theta$ refers to generators of bidegree $(0,-)$, $\lambda$ to generators of bidegree $(1,-)$, and $I$ to the defining ideal of~$\O_Y$, then
\deq{
K^\bu(\O_Y) = \bigg( \left( \C[\lambda]/I \right)  \otimes_\C \C[\theta] \; , \; ~ \d_K = \lambda \pdv{}{\theta} \bigg).
}
We will often write $K^\bu(Y)$ as a shorthand for $K^\bu(\O_Y)$. 
We remark that these choices of degrees for generators endow $K^\bu(Y)$ with the structure of a cdgsa. 
Moreover, there is a compatible filtration on $K^\bu(Y)$ which assigns weight one to both $\lambda$ and $\theta$ (so that the differential $\d_K$ is weight zero).

\begin{fact}
The associated graded $\Gr A$ of the tautological filtered cdgsa~$A$ is freely generated over $\A(T_+)$ by the Koszul complex $K^\bu(Y)$: that is,
\deq{
\Gr A \cong C^\infty(T_+) \otimes_\C K^\bu(Y).
}
\end{fact}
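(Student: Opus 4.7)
The plan is to decompose the claim into two pieces: first identify $\Gr A$ as a bigraded $C^\infty(T_+)$-algebra, and then verify that the differential induced by $\cD$ on the associated graded coincides with the Koszul differential on $K^\bu(Y)$.

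My first step is to observe that both filtrations $F_1$ and $F_2$ arise from honest nonnegative $\Z$-gradings on their respective factors: $F_1$ from wedge-degree on $\wedge^\bu(\lie{t}_-^\vee)$, and $F_2$ from polynomial degree on $\O_Y$, well-defined because the defining ideal of $Y$ is generated in pure (quadratic) degree. Passing to the associated graded therefore merely remembers the underlying graded algebras. Since $F$ is the standard convolution filtration on the tensor product, the general fact that $\Gr$ commutes with tensor products of such filtrations gives
\[
\Gr A \;\cong\; \Gr_{F_1} C^\infty(T) \otimes_\C \Gr_{F_2} \O_Y \;\cong\; C^\infty(T_+) \otimes_\C \wedge^\bu(\lie{t}_-^\vee) \otimes_\C \O_Y,
\]
which is precisely the underlying bigraded $C^\infty(T_+)$-algebra of $C^\infty(T_+) \otimes_\C K^\bu(Y)$.

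Next I would analyze the induced differential. Recall that $\cD$ is the image of the canonical element of $\lie{t}_-^\vee \otimes \lie{t}_-$ under $m \otimes R$. In local coordinates the right-invariant odd vector fields on $T$ admit the standard expansion
\[
R_\alpha \;=\; \pdv{}{\theta^\alpha} \;+\; c_{\alpha\beta}^\mu \, \theta^\beta \pdv{}{x^\mu} \;+\; \cdots,
\]
where every term beyond the leading one involves a strictly positive power of $\theta$ and no further $\partial/\partial\theta$. The leading piece $\partial/\partial\theta^\alpha$ has $F_1$-weight $-1$, and $m^\alpha$ has $F_2$-weight $+1$, so $\partial_{\theta^\alpha} \otimes m^\alpha$ preserves $F$; every correction term has nonnegative $F_1$-weight, is paired with the $F_2$-raising $m^\alpha$, and therefore strictly increases $F$, acting as zero on $\Gr A$. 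Under the isomorphism of the previous paragraph, the surviving summand is precisely $\lambda^\alpha \pdv{}{\theta^\alpha} = \d_K$, yielding the claimed identification as cdgsa's.

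The main technical care lies in justifying the expansion of $R_\alpha$ and verifying that no higher-order $\partial/\partial\theta$ corrections appear. This is controlled by BCH on the super Lie group $T$: right-invariance forces $R_\alpha$ to be uniquely determined by its value $\partial/\partial\theta^\alpha$ at the identity, and the nilpotence of the odd directions ensures that $R_\alpha$ is polynomial in $\theta$, so every correction term must involve multiplication by at least one $\theta$ and hence carries nonnegative $F_1$-weight. Once this is in hand, the rest is bookkeeping about filtrations and tensor products.
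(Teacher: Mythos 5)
Your proof is correct and takes essentially the same route as the paper. The paper states this Fact without a formal proof---only the remark immediately afterward that $A$ deforms $C^\infty(T_+)\otimes_\C K^\bu(Y)$ by the filtration-raising term $\lambda\theta\cdot\partial/\partial x$---and your argument makes that remark precise by combining the tensor-filtration identification of the underlying bigraded algebra with a BCH-based filtration-weight count on the right-invariant vector fields.
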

Indeed, $A$ can be viewed as a deformation of $C^\infty(T_+) \otimes_\C K^\bu(Y)$ by a new differential that is a first-order differential operator on~$\A(T_+)$. In coordinates, this new term in the differential looks like $ \lambda \theta \cdot \partial/\partial x$, where the relevant contraction is defined by Clifford multiplication (indeed, speaking more generally, by the bracket of~$\lie{t}$). In examples, the cohomology of~$K^\bu(Y)$ acquires the interpretation of the ``component fields''\footnote{More precisely, it describes the bundle in which the fields take values.} of the multiplet; further differentials in the spectral sequence of this filtration correspond, in physical terms, to the BRST or BV differentials of the multiplet. For a more precise explanation in terms of homotopy transfer, we refer the reader to~\cite{EHSW}.

\subsubsection{Further mathematical motivation}
As observed in~\cite{NV}, $\O_Y$ is closely related to the Lie algebra cohomology of the supertranslation algebra.
This relationship suggests that the tautological filtered complex is closely related to other, perhaps more well-known constructions; we include a few brief remarks here as motivation for mathematically-minded readers.

Given a positively graded Lie algebra $\lie{t} = \lie{t}_{>0}$, we can consider the corresponding formal moduli problem, which is also called the `classifying space' $B\lie{t}$. 
This space is defined to be the affine dg scheme whose ring of functions is the Chevalley--Eilenberg cochain complex $\clie^\bu(\lie{t})$ that computes Lie algebra cohomology. 
As motivation for the construction we will use, recall that a $\lie{t}$-module $M$ corresponds naturally to a sheaf $\cM$ on $B\lie{t}$ via a generalization of the associated bundle construction. 
As a $\clie^\bu(\lie{t})$-module, this is simply the cochain complex $\clie^\bu(\lie{t} , M)$ computing the Lie algebra cohomology with coefficients. 
There is a filtration on this cochain complex whose associated graded is $\clie^\bu (\ft) \otimes M$. 
Algebraically, this corresponds to viewing $M$ as a trivial $\ft$-module. 

For supertranslation algebras, one has $H^0(\lie{t}) = \O_Y$. 
Consider the spectral sequence associated to the filtration on $\clie^\bu(\ft, M)$. 
On the $E_1$ page, the degree-zero slice (with respect to the horizontal grading) is precisely the tautological algebra $A$ as a filtered cgsa.
The differential on the $E_1$ page is the tautological differential $\cD$ defined above. 

This immediately suggests that $\clie^\bu(\lie{t},M)$ could be interpreted as a further derived replacement for~$A$, and furthermore that techniques of Koszul duality could be applied to better understand the scope of the pure spinor superfield construction, using the equivalence between the module categories of~$U(\lie{t})$ and~$C^\bu(\lie{t})$. We reserve further study of such considerations for future work, including~\cite{EHSW}.

\subsection{The general structure of the computation}

Fix an odd element $Q \in \fp$ for which $[Q,Q]=0$. We can then define the twisted tautological cdgsa $A_Q$. 
As a cdgsa this is
\[
A_Q = \bigg(A [u] \; , \; \cD + u \cQ \bigg) 
\]
As discussed in~\S\ref{sec:twisting}, we can view $\CC[u]$ as a cdgsa (with zero differential) where $u$ has bidegree $(1, -)$. 
Further, the integer grading determines the structure of a filtered cdgsa on $\CC[u]$; in turn this determines a filtration on the tensor product $A[u] = A \otimes_\CC \CC[u]$. 
With this filtration, the differential $\cD + u \cQ$ endows $A_Q$ with the structure of a filtered cdgsa.

\subsubsection{Preliminaries on complex geometry}
\label{sec:dol}
We consider the commutative differential graded algebra of Dolbeault forms on a complex manifold $X$. 
This can be extended to a family over the affine line by considering the algebra 
\deq{
\Dol(X) = \bigg(  \Omega^{0,\bu}(X)[u]\; ,\; ~ u \dbar \bigg).
}
(Note that the fiber over the origin is the space of Dolbeault forms equipped with \emph{zero} differential; as in Remark~\ref{rmk:loc}, we will usually localize to the family $(u)^{-1} \Dol(X)$ over the punctured affine line.)

By a slight abuse of notation, we will also use this notation for odd-dimensional manifolds equipped with transverse holomorphic foliations (THF structures). The only relevant example in this paper will just be $X \times \R$, with $X$ a complex manifold. By definition, in this case,
\deq{
\Dol(X \times \R) =  \bigg( \Omega^{0,\bu}(X) \, \Hat{\otimes}_\pi \, \Omega^\bu(\RR) [u] \; , \; ~ u \left( 1 \otimes \d +  \dbar \otimes 1 \right) \bigg).
}
The notation is motivated by the fact that this sheaf plays the same role in the minimal twist of odd-dimensional multiplets that the sheaf of holomorphic functions (or rather its Dolbeault resolution) does in the minimal twist of even-dimensional multiplets.

We give $\Dol(X)$ the structure of a cdgsa by assigning bidegree $(1,-)$ to the even variable $u$, and bidegree $(0,-)$ to $\d\zbar$. The differential $u\dbar$ then has bidegree $(1,+)$ as required. 
We can further equip $\Dol(X)$ with a filtration $F$ by setting 
\deq{
F^k \Dol(X) = \bigoplus_{i + j = k} \Omega^{0,\geq i}(X) \otimes u^j \C[u].
}
This gives $\Dol(X)$ the structure of a filtered (local) cdgsa on~$X$. If $X = \C^n$ is an affine space, $\Dol(X)$ is equivariant in an obvious fashion for the $\lie{gl}(n)$ action on~$X$. The generalization to the case $X \times \R$ is obvious.

In this paper, $X$ will always be flat. We will often leave the complex structure implicit, and write (for example) $\Dol(\R^d)$, where $d$ may be even or odd, and the presence of a THF structure identifying $\R^d$ with $\C^n$ or $\C^n \times \R$ is understood. No confusion should arise.

\begin{rmk}\label{rmk:regrade}
There is a way to regrade 
$\Dol(X)$
to put it in its usual $\ZZ$-graded form. 
First, notice that there is a $\CC^\times$ action on this complex given by declaring that $|\d \zbar| = +1$ and $|u| = -1$. 
The differential is clearly invariant under this action.
Let ${\rm wt}(\alpha)$ be the resulting $\CC^\times$-weight of a general element $\alpha$ and define a new $\ZZ \times \ZZ/2$-grading $|a|'$ by the rule
\[
|a|' = |a| + ({\rm wt}(a), {\rm wt}(a) \; {\rm mod\; 2}) .
\]
Then, it is clear that $|u| = (0,+)$, $\Omega^{0,\bu}(X)$ carries its ordinary $\ZZ$-grading, and the $\ZZ/2$-grading is trivial. 
If we specialize $u =1$ we obtain the ordinary $\ZZ$-graded Dolbeault complex $(\Omega^{0,\bu}(X), \dbar)$. 
\end{rmk}

\subsubsection{Holomorphic decomposition} 
The connection between supersymmetry and complex geometry is provided by twisting. Every supersymmetry algebra with at least four supercharges admits a minimal or holomorphic twist; we will abuse terminology slightly by using these terms interchangeably, thus also referring to minimal twists in odd dimensions as ``holomorphic.''

The existence of the holomorphic twist is ensured by general properties of Clifford modules. Recall that the spin representation of $\lie{spin}(V)$ can be constructed by choosing a maximal isotropic subspace $L \subset V_\C$. We can then fix a decomposition of~$V_\C$ as a direct sum,
\deq{
  V_\C = L \oplus L^\vee \oplus (L^\perp/L).
}
Note that $(L^\perp/L)$ is either trivial or one-dimensional, depending on whether $d = \dim(V)$ is even- or odd-dimensional. $L$ itself is $n = \lfloor d/2 \rfloor$-dimensional, so that $d = 2n$ or $2n+1$. The stabilizer of this decomposition is $\lie{gl}(n) \subset \lie{so}(d)$ (at least if $d$ is even).

The (``Dirac'') spin representation $S$ is then constructed by taking the exterior algebra on the linear dual~$L^\vee$ (twisted by a square root of the determinant line of~$L$, which we will often tacitly ignore, see the remark below). Clifford multiplication is determined by taking $L^\vee$ to act by wedging and $L$ by contraction; if present, $(L^\perp/L)$ acts by a sign $\pm(-1)^k$ on $k$-forms, preserving form degree. The action of the complex spin algebra is determined by the isomorphism $\lie{spin}(V) \cong \wedge^2 V_\C$. When $d$ is odd (Dynkin type $B_n$), the spin representation is irreducible; when $d$ is even (Dynkin type $D_n$), it is the direct sum of two irreducible ``chiral'' spin representations,
\deq{
  S_+ = \wedge^\text{even} L^\vee, \quad S_- = \wedge^\text{odd} L^\vee.
}
All of these constructions are $\lie{gl}(L)$-equivariant; we note that the choice of $L\subset V_\C$ determines an identification of $V_\R$ with either $\C^n$ or $\C^n \times \R$, as the case may be (up to a choice of complex basis). If $d$ is even, this is just a choice of complex structure on~$\R^d$. The real form of the stabilizer of~$L$ is~$U(n)$; the moduli space of such choices is therefore the orthogonal Grassmannian $SO(d)/U(n)$, which is a symmetric space for $SO(d)$ of complex dimension $n(n-1)/2$ or $n(n+1)/2$, for even or odd $d$ respectively.

In what follows, we will use this decomposition of the spin representation repeatedly, and will often refer to the $i$-form component of a spinor variable just with a subscript $i$. Thus, if $\lambda$ is a coordinate in a spin representation, $\lambda_i$ denotes a coordinate on the summand $\wedge^i L^\vee$.

\begin{rmk}
  These considerations are closely connected to the geometry of spin bundles on Kähler manifolds, where they are encapsulated by the bundle isomorphism 
  \deq{
    S \cong K_X^{1/2} \otimes \wedge^\bu (T^{\vee}_X)^{0,1}.
  }
Locally, the (square-root of the) canonical bundle is trivialized, and we will assume a trivialization in the remainder of the paper.
The global analogue of~$L$ is the antiholomorphic tangent bundle $T^{0,1}_X$, which acts trivially by contraction on the Dolbeault complex. The Kähler metric provides an identification of~$(T^\vee_X)^{0,1}$ with~$T^{1,0}_X$, so that one sees that the holomorphic translations do not appear in the image of~$\ad_Q$ and thus survive the twist.
\end{rmk}

We can now discuss the construction of the holomorphic twist. The form of the pairing in the supertranslation algebra (which is always constructed using Clifford multiplication) ensures that a supercharge in $\wedge^i L^\vee$ can pair nontrivially only with $\wedge^j L^\vee$ when $i+j = n-1$, $n$, or $n+1$. As such, if $n>2$, the supercharge in~$\wedge^0 L^\vee$ is always of square zero, and is determined (up to $R$-symmetry) by a choice of~$L$; conversely, a supercharge $Q$ of this sort determines $L$ by considering its annihilator under Clifford multiplication.

There is thus always a stratum in the nilpotence variety isomorphic to the orthogonal Grassmannian of maximal isotropic subspaces in~$V_\C$, or to a product of this space with another factor if $R$-symmetry is present. Supercharges in this stratum define holomorphic twists; conversely, we can construct the holomorphic twist by placing the theory on a Kähler manifold and considering its deformation with respect to a supercharge of zero-form type. 

Since $\lie{p}$ admits an integral lift of the $\Z/2\Z$ grading with support in degrees zero, one, and two, a square-zero supercharge is exactly the same thing as a Maurer--Cartan element, and thus determines a deformation of~$\lie{p}$ to a dg Lie algebra $\lie{p}_Q = (\lie{p}, \ad_Q)$. We will have cause to consider this dg Lie algebra, and in particular its minimal model $H^\bu(\lie{p}_Q)$, frequently in what follows. 

In degree zero, the cohomology is determined by a kernel condition, and consists of the stabilizer of~$Q$ in~$\fp_0$; this is the direct sum of a parabolic subgroup related to the compact subgroup $SU(n)$ with the stabilizer of~$Q$ in the $R$-symmetry algebra. Restricting to the even-dimensional case for a moment and using the decomposition
\deq{
  \lie{spin}(V) = \wedge^2V = \wedge^2 L \oplus \wedge^2 L^\vee \oplus \lie{gl}(L),
}
we see that the kernel of~$\ad_Q$ is $\lie{sl}(L) \oplus \wedge^2 L$. The computation in odd dimensions is similar.
  Similarly, in degree two, the cohomology is the cokernel of $\ad_Q$, which can be identified with the space $L$ of surviving (holomorphic) translations. 

  In degree one, the computation is slightly more involved. Forms of degree $(n-1)$ and $n$ generally fail to commute with $Q$ (unless they do for reasons of $R$-symmetry), and so are eliminated by the kernel condition. In fact, the subspace of supercharges failing to commute with $Q$ is always isomorphic via $\ad_Q$ to the space of $Q$-exact translations, which it nullhomotopes in the twist. Geometrically, such supercharges correspond to the algebraic normal bundle of~$Y$ at~$Q$.
  
  The image condition nullhomotopes two-form supercharges, using the identity morphism on the $R$-symmetry space. 
  Taking $d$ to be even and ignoring $R$-symmetry for simplicity, it is easy to see that the dimension of the image is $n(n-1)/2$, which agrees with the dimension of the orthogonal Grassmannian as computed above. Similar computations, which we will show in detail in examples below, show that the image of $\ad_Q$ in degree one is the tangent space at~$Q$ to the lowest stratum of holomorphic supercharges in~$Y$. 

  The odd elements of $H^\bu(\lie{p}_Q)$ thus correspond to deformations of the holomorphic supercharge $Q$, modulo deformations to other holomorphic supercharges, which obey the square-zero condition at linear order. Furthermore, some brackets between odd elements can, and do, survive. There is thus a nilpotence variety $Y_Q$ associated to~$H^\bu(\lie{p}_Q)_{>0}$, which reflects the fact that the holomorphic supercharges lie on a singular locus of the nilpotence variety $Y$, and whose defining equations represent the obstructions to extending first-order deformations of $Q$ off of the holomorphic stratum to genuine deformations. (Of course $Y_Q$ is also the naive Maurer--Cartan space of~$\lie{p}_Q$.) $Y_Q$ will play an important role in what follows.

\subsubsection{The structure of the proofs}
We now sketch the form of the results and outline the computational techniques that are used to obtain them. These are uniform in each case. As such, we will see the pattern sketched here played out in each example in the remainder of the paper.

We always begin with the tautological filtered cdgsa $A$, representing a specific multiplet for a chosen supersymmetry algebra in $d$ real dimensions.\footnote{As emphasized above, other equivariant sheaves on $\O_Y$ would give rise to other multiplets, though we do not consider this here. The tautological filtered cdgsa will produce either a vector, tensor, or gravity multiplet, depending on dimension.} In~\S\ref{sec:smooth}, the relevant algebras are for minimal supersymmetry in dimensions four, six, and ten; we treat these uniformly. The key unifying feature is that the nilpotence variety $Y$ is smooth in each case; as we will see below, this implies that the twist of the multiplet is just given by $\Dol(\RR^d)$. Further sections treat the abelian $\N=(2,0)$ multiplet in six dimensions, eleven-dimensional supergravity, and finally the type IIB supergravity multiplet in ten dimensions.

\begin{enumerate}[label=\emph{\alph*}., labelsep = *, leftmargin = \parindent, labelindent = 0 pt]
  \item The first step is to consider the twist of the tautological filtered cdgsa; as recalled above, this is given by
    \deq{
      A^\bu_Q = \left( A^\bu[u], \cD + u \cQ \right),
    }
    which acquires the structure of a filtered cdgsa by placing $u$ in bidegree $(1,-)$. Note, however, that $A^\bu_Q$ is no longer $\lie{p}$-equivariant; rather, it is equivariant for the differential graded super Lie algebra $\lie{p}_Q$. 
    \item We will use the spectral sequence associated to the filtration $F^\bu A_Q$ to study the cohomology of the twisted cdgsa $A_Q$. 
Recall from above that the associated graded of the (untwisted) tautological cdgsa $A$ is $\Gr A \cong \rC^\infty(\RR^d) \otimes K^\bu (Y)$
where $K^\bu(Y)$ is the Koszul complex associated to the nilpotence variety. 
For $Q$ a holomorphic supercharge it's immediate to see the associated graded of the twisted cdgsa $A_Q$ is
\[
\Gr A_Q \cong \bigg(\rC^\infty(\RR^d) \otimes K^\bu (Y) [u] \; , \; \d_K + u \frac{\partial}{\partial \theta_n} \bigg) 
\]
where $\d_K$ is the (untwisted) Koszul differential acting on $K^\bu(Y)$. 
\item 
After inverting $u$ we will identify the cohomology of this associated graded
\[
H^\bu\big( (u)^{-1} \Gr A_Q \big) \cong (u)^{-1} \Dol(\RR^d) \otimes H^\bu \big(K^\bu(Y_Q)\big) 
\]
as cgsa's (trivial differentials). 
\item This completely describes the $E_2$ page of the spectral sequence computing the cohomology of $(u)^{-1} A_Q$.
In all of our examples the sequence collapses at $E_3$, so it suffices to characterize the differential on this page which consists of the differential acting on $\Dol(\RR^d)$ together with a differential induced by the Lie bracket in the algebra $\fp_Q$.

\end{enumerate}

As part (d) indicates the structure of the graded Lie algebra $H^\bu(\lie{p}_Q)$ and the geometry of the nilpotence variety $Y$ in the neighborhood of the point corresponding to the holomorphic supercharge $Q$ are closely related. 
Indeed, after restricting to the complement of a hyperplane section determined by the choice of holomorphic supercharge $Q$, $Y$ is the total space of a fibration over $Y_Q$, whose fibers are affine and correspond to the stratum of holomorphic supercharges. 
In the case where the nilpotence variety is smooth, the algebra $\fp_Q$ is abelian, and one is simply left with the differential on the Dolbeault complex.

\section{Smooth varieties and Yang--Mills multiplets}
\label{sec:smooth}

\subsection{Details on minimal supersymmetry in dimensions four, six, and ten}
\label{ssec:mindetails}

In this section, we quickly recall a few standard facts about the relevant supersymmetry algebras, their nilpotence varieties $Y$, their twists, and their associated tautological filtered algebras. 
For more details, the reader is referred to~\cite{BerkovitsCovariant, Cederwall, MovshevSchwarz,NV}.

\subsubsection{$\N=1$ supersymmetry in four dimensions}

The $\ZZ$-graded lift of the (complexified) four-dimensional supertranslation algebra takes the form
\deq{
  \lie{t} =  (S_+ \oplus S_-)[-1] \oplus V[-2].
}
The automorphisms are $\lie{p}_0 = \so(4) \oplus \lie{u}(1)$; $V \cong \C^4$ denotes the fundamental representation of $\so(4,\C)$ and $S_\pm$ the chiral spin representations. The $\lie{u}(1)$ weights are determined by the chirality.

There is a $\lie{p}_0$-equivariant isomorphism
\deq[eq:4dbracket]{
  S_+ \otimes S_- \cong V.
}
The bracket is just defined by extending this map by zero.  Since~\eqref{eq:4dbracket} is an isomorphism, it is trivial to see that the variety $Y$ consists of the two hyperplanes $S_+$ and $S_-$ inside of~$S_+ \oplus S_-$, intersecting transversely at the origin.

Fix a point $Q \in S_+ \subset \NV{4}$; such a choice determines a complex structure on $V_\R$, and in turn a maximal isotropic subspace $L \subset V$. As reviewed above, there is then a decomposition
\deq{
  S_+ \oplus S_- = \wedge^\bu(L^\vee),
}
where the chirality corresponds to the parity of the form degree.
We will write $\lambda_i$ for coordinates in a corresponding decomposition, where $i \in \{0,1,2\}$. Using these coordinates, the defining equations of~\NV{4} become 
\begin{equation}\label{eqn:4d1ps}
  \lambda_0 \lambda_1 = 0, \qquad  \lambda_2 \lambda_1  =  0.
\end{equation}
It is then easy to see the structure of the dg Lie algebra $\lie{t}_Q = (\ft, [Q,-])$.

\begin{prop}
As a cochain complex, the dg Lie algebra $\fp_Q$ is
\[
  \begin{tikzcd}[row sep = 1 ex]
    0 & 1 & 2 \\ \hline \\
    \wedge^2 L^\vee \ar[r] & \wedge^2 L^\vee \\
    \lie{sl}(L)  & \wedge^1 L^\vee \ar[r] & L \\
    \C \ar[r] & \wedge^0 L^\vee & L^\vee \\
    \wedge^2 L
  \end{tikzcd}
\]
where all arrows are identity morphisms.
Thus $H^\bu(\lie{p}_Q) = \lie{sl}(L) \oplus \wedge^2 L \oplus L^\vee[-2]$, and $Y_Q = \Spec \C$ is trivial.
\end{prop}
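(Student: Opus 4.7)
The plan is to compute $\ad_Q$ summand-by-summand after decomposing each $\Z$-graded piece of $\fp$ using the holomorphic splitting $V_\CC = L \oplus L^\vee$ induced by $Q$, then read off kernels and images to obtain $H^\bu(\fp_Q)$, and finally use the vanishing of $H^1$ to conclude that $Y_Q$ is trivial.

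First I would refine the $\Z$-graded decomposition of $\fp$. In degree~2, $V = L \oplus L^\vee$. In degree~1, the spin representation $S = S_+ \oplus S_- \cong \wedge^\bu L^\vee$ splits by form parity into $\wedge^0 L^\vee \oplus \wedge^2 L^\vee \oplus \wedge^1 L^\vee$. In degree~0, using $\so(V) \cong \wedge^2 V = \wedge^2 L \oplus \lie{gl}(L) \oplus \wedge^2 L^\vee$ together with $\lie{gl}(L) = \lie{sl}(L) \oplus \CC$, one recovers the left column of the diagram; I would absorb the $\lie{u}(1)_R$ factor into the $\lie{gl}(L)$ center, since both act on $Q$ by a scalar and only their combined rank-one image matters for $\ad_Q$.

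Next I would compute each component of $\ad_Q$ using two Clifford-algebraic facts: the spin action of $\so(V)$ on $\wedge^\bu L^\vee$ is wedging for $\wedge^2 L^\vee$, contraction for $\wedge^2 L$, and degree-preserving for $\lie{gl}(L)$; and the supertranslation bracket vanishes on $S_\pm \otimes S_\pm$, so the odd-to-even component of $\ad_Q$ is controlled by the pairing $S_+ \otimes S_- \to V$ with one argument fixed to $Q$. With $Q$ a generator of $\wedge^0 L^\vee$, wedging with $Q$ is the identity on $\wedge^2 L^\vee$; contracting a scalar vanishes, so $\wedge^2 L$ acts trivially; $\lie{sl}(L)$ annihilates a scalar; and the central $\CC$ rescales $Q$, supplying the arrow $\CC \to \wedge^0 L^\vee$. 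For the differential into degree~2, both $\wedge^0 L^\vee$ and $\wedge^2 L^\vee$ lie in $S_+$ and so bracket to zero with $Q$, whereas $\wedge^1 L^\vee = S_- \xrightarrow{\ad_Q} V$ is an isomorphism onto one of the two maximal isotropics, which we fix to be $L$ by the conventions set up in the K\"ahler-geometry remark preceding this proposition.

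With the diagram verified, cohomology is read off by inspection: $H^0 = \lie{sl}(L) \oplus \wedge^2 L$; $H^1 = 0$ since the degree-one kernel $\wedge^0 L^\vee \oplus \wedge^2 L^\vee$ coincides with the image of $\ad_Q|_{\fp_0}$; and $H^2 = L^\vee$ as the cokernel of $\wedge^1 L^\vee \twoheadrightarrow L$ inside $V$. For the final claim, $Y_Q$ is cut out by quadratic relations in the odd part of $H^\bu(\fp_Q)_{>0}$; since $H^1(\fp_Q) = 0$, no odd generators survive, so $\O_{Y_Q} = \CC$ and $Y_Q = \Spec \CC$. The only real obstacle is careful bookkeeping of Clifford conventions---specifically, pinning down that the image of $\ad_Q$ in degree two is $L$ rather than $L^\vee$, and confirming that $\lie{u}(1)_R$ together with the $\lie{gl}(L)$ center collapse to the single $\CC$ arrow displayed in the diagram.
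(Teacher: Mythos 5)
The overall strategy---decomposing $\fp_Q$ summand by summand under the $\lie{gl}(L)$-splitting determined by $Q$, computing $\ad_Q$ via the Clifford action of $\so(V)$ on $\wedge^\bu L^\vee$ and via the supertranslation bracket $S_+\otimes S_-\to V$ with one slot fixed, then reading off the cohomology degree by degree---is exactly the blueprint the paper lays out in the ``Holomorphic decomposition'' subsection and then leaves implicit here (the text preceding the proposition says only ``it is then easy to see''). Your Clifford identifications (wedging for $\wedge^2 L^\vee$, contraction for $\wedge^2 L$, form-degree-preserving for $\lie{gl}(L)$), your use of $[S_+,S_+]=0$ to kill the even-form supercharges in degree one, the identification of $\wedge^1 L^\vee\xrightarrow{\sim} L$ via $\ad_Q$, and the resulting $H^1 = 0$, $H^2 = L^\vee$ are all correct, as is the final step deducing $Y_Q = \Spec\CC$ from $H^1(\fp_Q)=0$.

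The weak spot is the ``absorption'' of $\lie{u}(1)_R$ into the center of $\lie{gl}(L)$. Both summands act on the one-dimensional space $\wedge^0 L^\vee$ by nonzero scalars, so the combined map $\CC_{\lie{gl}(L)}\oplus\lie{u}(1)_R\to\wedge^0 L^\vee$ has rank one and hence a one-dimensional kernel; that kernel should appear as a surviving $\lie{u}(1)$ in $H^0$. Absorbing one $\CC$ into another discards a dimension of $\fp_0$, and your parenthetical (``only their combined rank-one image matters for $\ad_Q$'') is true for the image but silent about the kernel, which is precisely what contributes to cohomology. Compare the 6d $\cN=(1,0)$ proposition in the same section: there the two arrows $\CC\to(\wedge^0 L^\vee)_+$ and $\lie{gl}(\rho)\to(\wedge^0 L^\vee)_+$ are drawn separately and the surviving $\lie{u}(1)$ is explicitly recorded in $H^\bu(\fp_Q)$. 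As printed, the 4d diagram has only six dimensions in degree zero while $\fp_0 = \so(4)\oplus\lie{u}(1)$ has seven. Whether or not this is an intentional simplification in the paper, a complete proof should either note the extra $\lie{u}(1)$ in $H^0$ or explicitly restrict $\fp_0$ to $\so(4)$ before claiming to reproduce the diagram; merely ``absorbing'' is not a valid justification and would have given the wrong answer in the 6d case.
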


Let us spell out the tautological filtered cdgsa using these coordinates. 
As a commutative graded super algebra
\deq{
  A = \rC^\infty(\C^2) [\theta_0,\theta_1,\theta_2] \otimes_\C \left( \frac{\C[\lambda_0,\lambda_1,\lambda_2]}{I} \right).
}
The tautological differential takes the form
\deq{
  \cD = \lambda_2 \left( \pdv{ }{\theta_2} - \theta_1 \pdv{ }{\zbar} \right) + \lambda_1 \left( \pdv{ }{\theta_1} - \theta_2 \pdv{  }{\zbar} -\theta_0 \pdv{ }{z} \right) + \lambda_0 \left( \pdv{ }{\theta_0} - \theta_1 \pdv{ }{z} \right) . 
}
The element $Q$ acts on this algebra via the operator
\deq{
  \cQ = \pdv{ }{\theta_2} + \theta_1 \pdv{ }{\zbar}.
}
This defines the twisted tautological cdgsa $A_Q = (A[u], \cD + u \cQ)$. We further recall that $A_Q$ has a filtration defined by
\[
F^\ell A_Q = \bigoplus_{i + j + k \geq \ell} \rC^\infty (\CC^2)[\theta_0,\theta_1,\theta_2]^i \otimes \left(\frac{\CC[\lambda_0,\lambda_1,\lambda_2]^j}{I}\right) \otimes \CC[u]^k  .
\]

\subsubsection{$\N=(1,0)$ supersymmetry in six dimensions}

The six-dimensional $\N=(1,0)$ supertranslation algebra is
\[
  \ft = (S_{+} \otimes R)[-1] \oplus V[-2].
\]
Its automorphisms are $\lie{p}_0 = \so(6) \oplus \sp(1)$; $V \cong \CC^6$ denotes the fundamental representation of $\so(6, \CC)$, $S_{+} \cong \CC^4$ is an irreducible semi-spin representation, and $R$ is a two-dimensional complex vector space equipped with a linear symplectic form $\omega_{R}$, forming the defining representation of~$\lie{sp}(1)$.

There is an $\so(6, \CC)$ equivariant isomorphism
\deq{
  \Gamma_{\Omega^1} \colon \wedge^2(S_{+}) \xto{\cong} V_6 .
}
The bracket on~$\ft_{6}$ is defined by the tensor product of this map with the symplectic form $\omega_{R}$, which defines an $\so(6) \oplus \sp(1)$-equivariant map from the symmetric square of $S_{+} \otimes R$ to $V$:
\deq{
[s_+ \otimes r, s_+' \otimes r'] = \Gamma_{\Omega^1} (s_+ , s_+') \otimes \omega_R (r,r') .
}

Viewing $S_+\otimes R$ as a space of $2\times 4$ complex matrices, it is easy to see that the bracket map defines the six $2\times 2$ minors, so that the nilpotence variety $Y$ is the space of decomposable (rank-one) elements of $S_+ \otimes R$.

Now fix a point $Q \in Y_{6}$; the choice of such a point determines both a maximal isotropic subspace $L\subset V = \C^6$, corresponding to a choice of complex structure on~$\RR^6$, and a maximal isotropic subspace (in other words, a line) $\rho \subset R$. 

Using this data, we can define the following coordinates in a neighborhood of~$Q$.
As an $\lie{sl}(3)$ module, the semi-spin representation decomposes as 
\deq{
  S_{+} = \wedge^\text{even}(L^\vee) = \CC \oplus \wedge^2 L^\vee . 
}
The coordinates $\lambda$ of~$Y_6$ transform in the contragredient representation, $S_{-} \otimes R$, where $S_{-}$ is the other irreducible semi-spin representation.
Let $(\lambda_1^+, \lambda_3^+ ; \lambda_1^- , \lambda_3^-)$ be $\fgl(L) \times \fgl(\rho)$-equivariant coordinates.
The defining equations of~$Y$ then become 
\begin{equation} \label{eqn:6d10ps}
  \lambda^+_1 \lambda^-_3 - \lambda^-_1 \lambda^+_3  = 0 , \qquad
  \lambda^+_1 \wedge \lambda^-_1  = 0 .
\end{equation}
It is easy to understand the dg Lie algebra $\lie{p}_Q$ from here.

\begin{prop}
As a cochain complex, the dg Lie algebra $\fp_Q$ is
\begin{equation}
  \begin{tikzcd}[row sep = 1 ex]
    0 & 1 & 2 \\ \hline \\
    \wedge^2 L^\vee \ar[r] & (\wedge^2 L^\vee)_+ \\
    \lie{sl}(L)  & (\wedge^2 L^\vee)_- \ar[r] & L\\
    \C \ar[r] & (\wedge^0 L^\vee)_+ & L^\vee \\
    \lie{gl}(\rho) \ar[ru] \\
  (\rho^\vee)^{\otimes2} \ar[r] & (\wedge^0 L^\vee)_- \\
    \wedge^2 L \oplus \rho^{\otimes 2}.
  \end{tikzcd}
\end{equation}
Thus $H^\bu(\lie{p}_Q) = \lie{sl}(L) \oplus \wedge^2 L \oplus \rho^{\otimes 2} \oplus \lie{u}(1) \oplus L^\vee[-2]$. 
The action of $\lie{u}(1)$ is determined by the kernel condition; it is trivial on $L^\vee$, and acts with charge $-1$ on $\wedge^2 L$ and charge $+3$ on~$\rho^{\otimes 2}$. 
There are no odd elements, so that $Y_Q = \Spec \C$ is again trivial. 
\end{prop}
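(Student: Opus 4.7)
The plan is a direct computation of the cochain complex $\fp_Q = (\fp, \ad_Q)$ by decomposing each graded piece of $\fp$ under the stabilizer $\lie{gl}(L) \times \lie{gl}(\rho)$ of the chosen complex and symplectic structures. In degree zero, I would use $\lie{so}(6) = \wedge^2 V = \wedge^2 L \oplus \lie{gl}(L) \oplus \wedge^2 L^\vee$ together with $\lie{gl}(L) = \lie{sl}(L) \oplus \CC$, and $\lie{sp}(1) = \rho^{\otimes 2} \oplus \lie{gl}(\rho) \oplus (\rho^\vee)^{\otimes 2}$. In degree one, $S_+ \otimes R = (\CC \oplus \wedge^2 L^\vee) \otimes (\rho \oplus \rho^\vee)$ produces the four summands $(\wedge^i L^\vee)_\pm$ appearing in the diagram, and in degree two $V = L \oplus L^\vee$. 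After fixing the representative $Q = 1 \otimes \rho_0$ of the chosen point, I would evaluate $\ad_Q$ on each summand using the bracket formula $[s \otimes r, s' \otimes r'] = \Gamma_{\Omega^1}(s,s') \, \omega_R(r,r')$ together with the natural action of $\fp_0$ on the odd part.

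Each arrow in the claimed diagram is then transparent to read off. For $\fp_1 \to \fp_2$, the map $\Gamma_{\Omega^1}(1,-)$ annihilates $\CC \subset S_+$ and sends $\wedge^2 L^\vee$ isomorphically to $L$ via the determinant identification, while $\omega_R(\rho_0,-)$ annihilates $\rho$ and is nonzero on $\rho^\vee$; so only $(\wedge^2 L^\vee)_-$ maps nontrivially, and it does so isomorphically onto $L$. For $\fp_0 \to \fp_1$, the piece $\wedge^2 L^\vee \subset \lie{so}(6)$ wedges against $1 \in S_+$, landing isomorphically in $(\wedge^2 L^\vee)_+$; the center $\CC$ of $\lie{gl}(L)$ scales $1 \in \wedge^0 L^\vee$, mapping to $(\wedge^0 L^\vee)_+$; $\lie{gl}(\rho)$ scales $\rho_0$, also mapping to $(\wedge^0 L^\vee)_+$; and the lowering operator $(\rho^\vee)^{\otimes 2} \subset \lie{sp}(1)$ sends $\rho_0$ to $\rho^\vee$, mapping to $(\wedge^0 L^\vee)_-$. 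The remaining summands $\lie{sl}(L)$, $\wedge^2 L$, and $\rho^{\otimes 2}$ each annihilate~$Q$ (the last because $\rho_0$ is already at the top weight).

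Reading off kernels and cokernels gives the cohomology. In degree two the cokernel is $L^\vee$; in degree one the four image summands exhaust $\fp_1$ injectively, so the cohomology vanishes. In degree zero one obtains $\lie{sl}(L) \oplus \wedge^2 L \oplus \rho^{\otimes 2}$ together with the one-dimensional kernel of the combined map $\CC \oplus \lie{gl}(\rho) \to (\wedge^0 L^\vee)_+$, both of whose images lie along $Q$; this extra kernel direction is the claimed antidiagonal $\lie{u}(1)$. The weights of this $\lie{u}(1)$ on the remaining summands are then determined by tracking the weight of the center of $\lie{gl}(L)$ on $1 \in \wedge^0 L^\vee$ against its weight on $\wedge^2 L$ (and similarly the weight of $\lie{gl}(\rho)$ on $\rho_0$ against $\rho^{\otimes 2}$), then restricting to the antidiagonal combination fixed by the kernel condition.

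The main bookkeeping obstacle is in this last step: one must pin down a consistent normalization of the two Cartan weights to verify the claimed charges $-1$ on $\wedge^2 L$, $+3$ on $\rho^{\otimes 2}$, and $0$ on $L^\vee$. Once that is in place, the assertion $Y_Q = \Spec \CC$ is immediate, since $H^\bu(\fp_Q)$ is concentrated in even cohomological degrees and therefore admits no nontrivial Maurer--Cartan elements.
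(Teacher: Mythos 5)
The decomposition you carry out — breaking each graded piece of $\fp$ under $\lie{gl}(L)\times\lie{gl}(\rho)$ and evaluating $\ad_Q$ summand by summand — is the natural and essentially only approach, and your reading-off of the arrows, the kernels, the cokernel in degree two, and the vanishing of $H^1$ is correct. Your argument for $Y_Q = \Spec\CC$ (all cohomology in even degree, hence no odd square-zero elements) is also correct.

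Two points deserve more care. First, the arrow $\CC \to (\wedge^0 L^\vee)_+$ only exists because the spinor representation is really $\wedge^\bullet L^\vee \otimes (\det L)^{1/2}$; without the determinant twist, $\wedge^0 L^\vee$ is the trivial $\lie{gl}(L)$-module and the centre of $\lie{gl}(L)$ would kill $Q$ outright, putting $\CC$ directly into the kernel. You gloss over the twist when you write ``the center $\CC$ of $\lie{gl}(L)$ scales $1\in\wedge^0 L^\vee$''; you should make it explicit that the scaling is really on the twisted line $\wedge^0 L^\vee\otimes(\det L)^{1/2}$, since this is what makes the kernel a nontrivial antidiagonal rather than the centre itself.

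Second, you correctly defer the weight verification, but you should observe that the claimed charges cannot arise from any $\mu = a\,h_L + b\,h_\rho$: triviality on $L^\vee$ forces $a=0$, which in turn forces the weight on $\wedge^2 L$ to vanish as well, contradicting the stated charge $-1$. Working out the constraint $[\mu,Q]=0$ with the $(\det L)^{1/2}$ twist gives $\mu$ in the ratio $2h_L - 3h_\rho$, hence weights proportional to $(-2,\,4,\,-6)$ on $(L^\vee,\,\wedge^2 L,\,\rho^{\otimes 2})$, which is not a rescaling of $(0,-1,+3)$. So rather than mere bookkeeping, this last step either exposes a typo in the stated charges or requires a different convention (e.g.\ absorbing a cohomological grading into the $\lie{u}(1)$) that should be identified before the claim can be verified.
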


Let us spell out the tautological filtered cdgsa using these coordinates. 
As a commutative graded super algebra 
\deq{
  A = \rC^\infty (\CC^3)[\theta^+_1, \theta^+_3, \theta^-_1,\theta^-_3] \otimes \left(\frac{\CC[\lambda^+_1, \lambda^+_3 , \lambda^-_1,\lambda^-_3]}{I}\right) .
}
The tautological differential is
\begin{multline}
  \cD  = - \lambda^+_3 \left(\frac{\partial}{\partial \theta^+_3} - \theta^-_1 \frac{\partial}{\partial \zbar} \right) - \lambda^+_1 \left(\frac{\partial}{\partial \theta^+_1} - \theta^-_3 \frac{\partial}{\partial \zbar} - \theta^-_1 \frac{\partial}{\partial z} \right) 
   + \lambda^-_3 \left(\frac{\partial}{\partial \theta^-_3} - \theta^+_1 \frac{\partial}{\partial \zbar} \right) \\
   + \lambda^-_1 \left(\frac{\partial}{\partial \theta^-_1} - \theta^+_3 \frac{\partial}{\partial \zbar} - \theta^+_1 \frac{\partial}{\partial z} \right) .
 \end{multline}
The action of the element $Q$ 
is through the operator
\deq{
    \cQ = \frac{\partial}{\partial \theta^-_3} + \theta^+_1 \frac{\partial}{\partial \zbar} .
}
\subsubsection{$\cN=(1,0)$ supersymmetry in ten dimensions}

The super Lie algebra of the 10d $\N=(1,0)$ supertranslation group 
is of the form
\deq{
  \lie{t} = S_+[-1] \oplus V[-2],
}
where $S_+$ denotes the chiral spin representation and $V\cong \C^{10}$ the (complex) vector representation of~$\so(10,\C)$. 
The Lie bracket is defined by the unique $\so(10,\C)$-equivariant projection map from the symmetric square of $S_+$ to~$V$. (The other irreducible component of $\Sym^2(S_+)$ is the ``pure spinor'' piece with Dynkin label $[00002]$.)


As is by now familiar, fixing a point $Q \in Y$ uniquely specifies a maximal isotropic subspace $L \subset V$. 
The spinor representation then decomposes as a $\lie{sl}(5)$ representation as 
\deq{
  S_+ = \wedge^\text{even} L^\vee = \CC \oplus \wedge^2 L^\vee \oplus \wedge^4 L^\vee 
}
which can be identified with the space of constant evenly-graded holomorphic de Rham forms on~$\CC^{5}$ (twisted by a square root of the canonical bundle). 

We again denote supercharges by $Q_i$, where $i$ refers to their form type. 
The brackets in the algebra are determined by Clifford multiplication, and take the form
\begin{equation}
  [Q_0, Q_4] = \bar{P}, \quad [Q_2, Q_2] = \bar{P}, \quad [Q_2, Q_4] = P.
  \label{eq:10dYMbracket}
\end{equation}
From these equations, it is clear that $Q_2$ defines the tangent space to $Y_{10}$ at~$Q_0$, and that $Q_4$ can be thought of as the fiber of the normal bundle, which is responsible for witnessing the nullhomotopy of $\bar{P}$ with respect to $Q_0$. Because $[Q_2,Q_2] = 0$ in $Q_0$-cohomology, all first-order tangent deformations are unobstructed and the variety is smooth at $Q_0$.

Let $\lambda \in S_- \cong \wedge^\text{odd} L^\vee$ be a coordinate on~$S_+$; the scalar pairing between these representations is defined by evaluation on a chosen Calabi--Yau form in $\wedge^5 L^\vee$. The bracket~\eqref{eq:10dYMbracket} then gives rise to the ten defining equations for $Y$
\begin{equation}\label{eqn:10d10ps}
  \lambda_1 \lambda_5 + \frac{1}{2} \lambda_3 \lambda_3  =  0 , \qquad
  \lambda_3 \lambda_1  = 0 .
\end{equation}

It is straightforward to understand the dg Lie algebra $\lie{p}_Q$. 

\begin{prop}
As a cochain complex, the dg Lie algebra $\ft_Q$ is 
\begin{equation}
  \begin{tikzcd}[row sep = 1 ex]
    0 & 1 & 2 \\ \hline \\ 
    \wedge^2 L^\vee \ar[r] & \wedge^2 L^\vee \\
    \lie{sl}(L)  & \wedge^4 L^\vee \ar[r] & L\\
    \C \ar[r] & \wedge^0 L^\vee & L^\vee \\
    \wedge^2 L.
  \end{tikzcd}
\end{equation}
Thus $H^\bu(\lie{p}_Q)$ consists of the parabolic Lie algebra $\lie{sl}(L) \oplus \wedge^2 L$ in degree zero, extended by the module $L^\vee[-2]$. There are no odd elements, so that $Y_Q = \Spec \C$ is once again trivial. 
\end{prop}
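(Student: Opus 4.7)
The plan is to identify $\lie{p}_Q = (\lie{p}, \ad_Q)$ as a small three-term complex by exploiting the $\lie{sl}(L)$-equivariant decomposition of each graded piece induced by the choice of $Q$, then read off its cohomology directly.

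First I would fix the decompositions arising from the maximal isotropic $L \subset V$ determined by $Q = Q_0 \in \wedge^0 L^\vee$. In degree zero, $\lie{p}_0 = \lie{so}(10,\CC) \cong \wedge^2 V \cong \wedge^2 L \oplus \lie{gl}(L) \oplus \wedge^2 L^\vee$ with $\lie{gl}(L) = \lie{sl}(L) \oplus \CC$; in degree one, $S_+ = \wedge^0 L^\vee \oplus \wedge^2 L^\vee \oplus \wedge^4 L^\vee$; in degree two, $V = L \oplus L^\vee$. These are precisely the summands that appear in the diagram.

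Next I would compute $\ad_Q$ on each summand. The map from degree zero to degree one is the spin representation evaluated on $1 \in \wedge^0 L^\vee$: the block $\wedge^2 L^\vee$ acts by wedging and gives the identity $\wedge^2 L^\vee \to \wedge^2 L^\vee$; $\wedge^2 L$ acts by contraction and kills a scalar; $\lie{sl}(L)$ preserves the line $\wedge^0 L^\vee$ and acts trivially on it; and the central $\CC \subset \lie{gl}(L)$ acts on $\wedge^k L^\vee$ by the nonzero scalar $k - n/2$ (here $n = 5$), yielding an isomorphism $\CC \to \wedge^0 L^\vee$. The map from degree one to degree two is dictated by the brackets~\eqref{eq:10dYMbracket}: only $\wedge^4 L^\vee$ has nontrivial bracket with $Q_0$, realized via the Calabi--Yau pairing $\wedge^4 L^\vee \otimes \wedge^0 L^\vee \to L$ induced by the chosen trivialization of $\wedge^5 L^\vee$, which is an isomorphism onto the summand of $V$ identified as $L$. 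The main point of care is verifying that the image lands in the $L$ factor (rather than $L^\vee$) and that the map is nondegenerate; this is the one spot where sign and duality conventions must be tracked, but it follows formally from the Clifford/Calabi--Yau identification already set up above.

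With the differential in hand, the cohomology is immediate. In degree zero the kernel of $\ad_Q$ is $\lie{sl}(L) \oplus \wedge^2 L$. In degree one, $\wedge^0 L^\vee$ and $\wedge^2 L^\vee$ are both in the kernel and in the image, while $\wedge^4 L^\vee$ embeds isomorphically into $L$, so $H^1 = 0$. In degree two the image is exactly $L$, leaving $H^2 = L^\vee$. Assembling, $H^\bu(\lie{p}_Q) = \lie{sl}(L) \oplus \wedge^2 L \oplus L^\vee[-2]$. Since there are no odd elements in positive degree, the ideal defining the nilpotence variety of $H^\bu(\lie{p}_Q)_{>0}$ is trivial, so $Y_Q = \Spec \CC$ is a point.
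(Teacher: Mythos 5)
Your proposal is correct and takes essentially the same approach the paper uses implicitly: decompose $\lie{p}_0 = \lie{so}(10,\C) \cong \wedge^2 L \oplus \lie{gl}(L) \oplus \wedge^2 L^\vee$, $S_+ = \wedge^0 L^\vee \oplus \wedge^2 L^\vee \oplus \wedge^4 L^\vee$, and $V = L \oplus L^\vee$ into $\lie{sl}(L)$-representations and compute $\ad_Q$ summand by summand. The paper simply draws the resulting diagram without spelling out the details, and appeals to the smoothness of $Y$ at $Q$ to explain why $Y_Q$ is trivial (equivalently, that $H^1 = 0$); your representation-theoretic verification is the content behind that claim, and your observation that $\wedge^4 L^\vee \cong L$ via the Calabi--Yau pairing (so that the image in degree two is exactly $L$, leaving $H^2 = L^\vee$) is the right way to pin down the one duality subtlety.
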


As in the previous examples, the property that $Y_Q$ is trivial is a straightforward consequence of the smoothness of~$Y$ at~$Q$.

Let us spell out the tautological filtered cdgsa using these coordinates. 
As a commutative graded super algebra
\deq{
  A = \rC^\infty (\CC^5)[\theta_1,\theta_3,\theta_5] \otimes \left(\frac{\CC[\lambda_1,\lambda_3,\lambda_5]}{I}\right) .
}
The explicit coordinate form of the differential is
\deq{
  \cD = \lambda_5 \left(\pdv{ }{\theta_5} - \theta_1 \pdv{ }{\zbar}\right) + \lambda_3 \left(\pdv{ }{\theta_3} - \theta_3 \pdv{ }{\zbar} - \theta_1 \pdv{ }{z}\right) + \lambda_1 \left(\pdv{ }{\theta_1} - \theta_5 \pdv{ }{\zbar} - \theta_3 \pdv{ }{z}\right) .
}

The action of the element $Q$ on $A$ is through the operator
\deq{
  \cQ = \pdv{ }{\theta_5} + \theta_1 \pdv{ }{\zbar}.
}

\subsection{Characterizing the twisted tautological complexes} \label{sec:smoothpure}

In this section we relate the twisted tautological complex $A_Q$ of minimal supersymmetry in dimensions $4,6,10$ to complex geometry. 
In \S\ref{sec:dol} we have refined the Dolbeault complex of differential forms on a complex manifold $X$ to a family over the affine line $\Dol(X)$. 
In the theorem below we localize this family over the punctured affine line.  

\begin{athm}
\label{thm:asmooth}
Suppose $d = 2n = 4$, $6$, or~$10$.
There is a $\GL(n ; \CC)$-equivariant quasi-isomorphism of filtered local cdgsa's on $\CC^{n}$
\[
    \Phi \colon (u)^{-1} \Dol(\CC^n) \xto{\simeq} (u)^{-1} A_Q .
\]
\end{athm}

We proceed to define the map $\Phi$ for each of the three cases. 
Recall that like the ordinary complex of Dolbeault forms, $\Dol(X)$ is finite rank and freely generated as a graded module over smooth functions $\rC^\infty(X)$.
We refer back to the previous subsections for notations that appear below.

\begin{itemize}
\item For $2n=4$, $\Phi$ is defined by
\[
  \begin{array}{cccl}
          & f(z, \zbar) & \mapsto & f(z - \theta_0 \theta_1 - 2 u^{-1} \lambda_0 \theta_1 \theta_2 - 2 u^{-1} \lambda_2 \theta_0 \theta_1 \; , \; \zbar + \theta_1 \theta_2) \\
          & \d \zbar & \mapsto & 2 u^{-1} \lambda_2 \theta_1 .
            \end{array}
\]
\item For $2n=6$, $\Phi$ is defined by
\[
  \begin{array}{cccl}
          & f(z, \zbar) & \mapsto & f(z + \theta^-_1 \theta^+_1 + 2 u^{-1} \theta^- (\lambda_1^- \theta_3^+ - \lambda_1^+ \theta_3^-) \; , \; \zbar + \theta^+_1 \theta^-_3 + \theta^-_1 \theta^+_3) \\
          & \d \zbar & \mapsto & 2 u^{-1} (\lambda_3^+ \theta_1^- - \lambda_3^- \theta_1^+) 
  \end{array}
\]
\item For $2n = 10$, $\Phi$ is defined by
\[
  \begin{array}{cccl}
          & f(z, \zbar) & \mapsto & f(z - \theta_1 \theta_3 - 2 u^{-1} \lambda_1 \theta_3 \theta_5 + u^{-1} \lambda_3 \theta_3^2 \; , \; \zbar + \theta_1 \theta_5) \\
          & \d \zbar & \mapsto & - 2  u^{-1} \lambda_5 \theta_1 - u^{-1} \lambda_3 \theta_3 
  \end{array}
\]
\end{itemize}

By construction, the map $\Phi$ is a map of $\ZZ \times \ZZ/2$-graded algebras. 
The following lemma is a straightforward calculation. 

\begin{alem}
In each case, $\Phi$ is a map of filtered cdgsa's.
\end{alem}
\begin{proof}[Proof for $d=4$] 
First, we record the following identities:
\begin{itemize}
\item $\cD(z) = - \lambda_1 \theta_0 - \lambda_0 \theta_1$.
\item $\cD(\theta_0 \theta_1) = \lambda_0 \theta_1 - \lambda_1\theta_0$. 
\item $\cD(\lambda_0 \theta_1 \theta_2) = \lambda_0 \lambda_1 \theta_2 - \lambda_0 \lambda_2 \theta_1$. 
\item $\cQ (\lambda_0 \theta_1 \theta_2) = -\lambda_0 \theta_1$.
\item $\cD(\lambda_2 \theta_0\theta_1) = \lambda_2 \lambda_0  \theta_1 - \lambda_2 \lambda_1 \theta_0 .$
\end{itemize}

We proceed to show $(\cD + u \cQ) \Phi(z) = 0$. 
It is clear that there are no terms of order $u$ on the left hand side of this equation. 
Furthermore, by the first, second, and fourth identities the term proportional to $u^0$ vanishes. 
Finally, using the third and fifth identities together with the pure spinor constraint \eqref{eqn:4d1ps} we see that the term proportional to $u^{-1}$ also vanishes.

Next, we consider the anti-holomorphic coordinates.
We record the following identities:
\begin{itemize}
\item $\cD(\zbar) = - \lambda_2 \theta_1 - \lambda_1 \theta_0$. 
\item $\cD (\theta_1 \theta_2) = \lambda_1 \theta_2 - \lambda_2 \theta_1$. 
\end{itemize}
Thus $(\cD + u \cQ)\Phi(\zbar) = - 2 \lambda_2 \theta_1$, which is, by definition the image of $\Phi (u \d \zbar)$ as desired. 

Finally, we observe that $\Phi$ preserves the filtrations, namely $\Phi(F^k) \subset F^k$ for each $k \geq 0$. 
Indeed, the linear generators in nonzero filtration degree are $\d \zbar \in F^1 \Dol(\CC^2)$ and $u \in F^1 \Dol(\CC^2)$.
Since $\Phi (u) = u$ and $\Phi (\d \zbar) \in F^2 A_Q \subset F^1 A_Q$, we are done.

\end{proof}

\begin{proof}[Proof for $d=6$]

First, we record the following identities:
\begin{itemize}
\item $\cD (z) = \lambda_1^+ \theta_1^- - \lambda_1^- \theta_1^+$. 
\item $\cD (\theta_1^- \theta_1^+) = \lambda_1^- \theta_1^+ + \lambda_1^+ \theta_1^-$. 
\item $\cD(\lambda_1^+ \theta_1^- \theta_3^-) = \lambda_1^+ \lambda_1^- \theta_3^- - \lambda_1^+ \lambda_3^- \theta_1^- $.
\item $\cD(\lambda_1^- \theta_1^- \theta_3^+) = \lambda_1^- \lambda_1^- \theta_3^+ + \lambda_1^- \lambda_3^+ \theta_1^-$. 
\item $\cQ (\lambda_1^+ \theta_1^- \theta_3^-) = - \lambda_1^+ \theta_1^-$.
\end{itemize}

Using these identities we show that the total differential acting on 
\[
\Phi (z) = z + \theta^-_1 \theta^+_1 + 2 u^{-1} \theta^- (\lambda_1^- \theta_3^+ - \lambda_1^+ \theta_3^-)
\]
is zero.
First, we note that the term proportional to $u$ in $(\cD + u \cQ)\Phi(z)$ is zero. 
Using the first, second, and fifth identities we see that the term proportional to $u^0$ also vanishes. 
Finally, the term proportional to $u^{-1}$ is 
\[
2 \lambda_1^- \lambda_1^- \theta_3^+ - (\lambda_1^+\lambda_1^-) \theta_3^-  + (\lambda_1^- \lambda_3^+ - \lambda_1^+ \lambda_3^-) \theta_1^- . \]
The term proportional to $\theta_3^+$ vanishes by symmetry.
The terms proportional to $\theta_3^-, \theta_1^-$ vanish by the pure spinor constraint in Equation \eqref{eqn:6d10ps}. 

Next, we consider the anti-holomorphic coordinates. 
We record the following identities:
\begin{itemize}
\item $\cD (\zbar) = \lambda_3^+ \theta_1^- + \lambda_1^+ \theta_3^- - \lambda^-_3 \theta_1^+ - \lambda_1^- \theta_3^+$. 
\item $\cD(\theta_1^+ \theta_3^-) = - \lambda_1^+ \theta_3^- - \lambda_3^- \theta_1^+$. 
\item $\cD(\theta_1^- \theta_3^+) = \lambda_1^- \theta_3^+ + \lambda_3^+ \theta_1^-$. 
\item $\cQ (\zbar) = \theta_1^+$. 
\item $\cQ (\theta_1^+\theta_3^- + \theta_1^- \theta_3^+) = - \theta_1^+$. 
\end{itemize}
Using these identities we compute the total differential acting on $\Phi (\zbar)$
\[
(\cD + u \cQ) \Phi (\zbar) = 2 (\lambda_3^+ \theta_1^- - \lambda^-_3 \theta_1^+) .
\]
The right hand side is precisely $\Phi (u \d \zbar)$ as desired. 

By the same reasoning as in the previous case, we see that $\Phi$ preserves filtrations. 

\end{proof}

\begin{proof}[Proof for $d=10$]
We record the following identities.
\begin{itemize}
\item $\cD (z) = - \lambda_3 \theta_1 - \lambda_1 \theta_3$. 
\item $\cD (\theta_1 \theta_3) = \lambda_1 \theta_3 - \lambda_3 \theta_1.$
\item $\cD (\lambda_1 \theta_3 \theta_5) = \lambda_1 \lambda_3 \theta_5 - \lambda_1 \lambda_5 \theta_3$. 
\item $\cD (\lambda_3 \theta_3^2) = 2 \lambda_3^2 \theta_3$. 
\item $\cQ(\lambda_1 \theta_3 \theta_5) = - \lambda_1\theta_3$. 
\end{itemize}

We compute the total differential acting on 
\[
\Phi (z) = z - \theta_1 \theta_3 - 2 u^{-1} \lambda_1 \theta_3 \theta_5 +u^{-1} \lambda_3 \theta_3^2.
\]
First, we note that the term proportional to $u$ in $(\cD + u \cQ)\Phi(z)$ is zero. 
Using the first, second, and fifth identities we see that the term proportional to $u^0$ also vanishes. 
Using third and fourth identities the term proportional to $u^{-1}$ is
\[
- 2 \lambda_1 \lambda_3 \theta_5 + 2\lambda_1 \lambda_5 \theta_3 + \lambda_3^2 \theta_3 .
\]
This term vanishes by applying the pure spinor constraint,
Each term above vanishes by the pure spinor constraint in Equation \eqref{eqn:10d10ps}.

Next, we consider the anti-holomorphic coordinates.
From the identities 
\begin{itemize}
\item $\cD(\zbar) = - \lambda_5 \theta_1 - \lambda_3 \theta_3 - \lambda_1 \theta_5$,
\item $\cD (\theta_1 \theta_5) = \lambda_1 \theta_5 - \lambda_5 \theta_1$,
\end{itemize}
we read off
\[
(\cD + u \cQ) \Phi(\zbar) = - \lambda_3 \theta_3 - 2 \lambda_5 \theta_1 .
\]
The right hand side is precisely $\Phi (u \d \zbar)$ as desired. 

By the same reasoning as in the previous cases, we see that $\Phi$ preserves filtrations. 
\end{proof}

Consider the spectral sequence corresponding to the filtration on the localized twisted tautological bicomplex $(u)^{-1} A_Q$. 
Since $\Phi$ is a map of filtered cdgsa's, it defines a map of corresponding spectral sequences. 
Notice that the $E_1$ differential is identically zero. 
Already at this page we see that $\Phi$ induces an isomorphism. 

\begin{alem}
The map $\Phi$ induces an isomorphism on $E_1$-pages 
\[
  \Phi \colon  \Gr \left( (u)^{-1} \Dol(\CC^n) \right) \xto{\cong} H^\bu \left( \Gr \left(  (u)^{-1} A_Q \right) \right) .
\]
\end{alem}

To prove this lemma we will make use of an auxiliary $\ZZ$-grading on $\Gr A_Q$. 
Consider a new $\ZZ$-grading on $\Gr A_Q$ which assigns 
\begin{equation} \label{eqn:auxgr}
|\lambda|=|u|=|\theta_n| = 1
\end{equation}
with all other generators in degree zero.
Associated to this grading is an auxiliary filtration $\Tilde{F}^\bu \Gr A_Q$, which we will use to compute the cohomology.

As is clear from the discussion above, the associated graded of~$F^\bu A_Q$ is the tensor product of smooth functions on~$\R^d$ with a deformed version of the Koszul homology of~$Y$:
\deq{
  \Gr \left( (u)^{-1} A_Q \right) \cong C^\infty(\R^d) \otimes \left( K^\bu(Y)[u,u^{-1}], \d_K + u \pdv{ }{\theta_n} \right).
}
We will denote this deformed Koszul complex by $K^\bu_Q (Y)$. 
It is clear that functions on spacetime play no role at the $E_1$ page, so that we can think of the auxiliary filtration $\Tilde F^\bu$ as defined on~$K^\bu_Q (Y)$.

\begin{proof}[Proof for $d=4$]
Specializing to the case at hand, we can write the complex computing deformed Koszul homology explicitly in coordinates as follows:
\begin{equation}
  K^\bu_Q (Y) = \left(\frac{\C[\lambda_0, \lambda_1, \lambda_2]}{I} \otimes \C[\theta_0, \theta_1, \theta_2] [u, u^{-1}], ~ \sum
\lambda \pdv{ }{\theta} + u \pdv{ }{\theta_2}  \right) .
\end{equation}
Here, $I$ is the defining equation of the nilpotence variety, given by the equations
\deq{
  I = \langle \lambda_0 \lambda_1, \lambda_2 \lambda_1 \rangle.
}

Now consider the auxiliary filtration $\Tilde F^\bu K^\bu_Q$. 
This induces a spectral sequence $\{\Tilde E_r\}$ abutting to the deformed Koszul homology,
whose first page is
\deq{
  \Tilde E_1 = H^\bu\left(K_Q, (\lambda_2 + u) \pdv{ }{\theta_2} \right).
}
Taking cohomology of this differential has the effect of setting $\lambda_2 = -u$, which is equivalent to inverting $\lambda_2$. 
Geometrically, this corresponds to restricting to the complement of the hyperplane section $\lambda_2 = 0$ of~$Y$. The argument will proceed by arguing that the complement of this hyperplane section is in fact the product of $\C^\times \cong \Spec \C[\lambda_2,\lambda_2^{-1}]$ with an affine space.

Since $\lambda_2$ is now invertible, it follows from the second equation in~$I$ that $\lambda_1 = 0$; the first equation is then automatically satisfied. We therefore arrive at the description of the $\Tilde{E}_1$ page
\deq{
  \left( \Tilde E_1, \Tilde d_1 \right) = \left(  \C[\lambda_0] \otimes \C[u, u^{-1}] [\theta_0, \theta_1] , ~ \lambda_0 \pdv{ }{\theta_0} \right).
}
It is trivial to see that the spectral sequence $\{\Tilde E_r\}$ thus abuts to the graded commutative algebra
\deq{
 H^\bu(K_Q (Y)) = \C[u,u^{-1}, \theta_1]
}
The cohomology of $\Gr \left( (u)^{-1} A_Q \right)$ is thus the tensor product of this algebra with smooth functions on~$\R^4$. 
(We remark that the $E_2$ differential in the original spectral sequence is given by the only remaining component of $\cD_1$, which here is $2 u \theta_1 \cdot {\partial}/{\partial \zbar}$.)
The map $\Phi$ is the isomorphism that maps $f(z,\zbar) \d \zbar$ to $f(z,\zbar) \cdot 2\theta_1$ as desired. 
  \end{proof}
  
  \begin{proof}[Proof for $d=6$]
    In an explicit coordinate representation, the deformed Koszul cohomology is given by
    \deq{
      K_Q^\bu (Y) = \left( \frac{\C[\lambda_1^\pm,\lambda_3^\pm]}{I} \otimes \C[\theta_1^\pm,\theta_3^\pm][u, u^{-1}] \; , ~\; \sum \lambda \pdv{ }{\theta} + u \pdv{ }{\theta_3^-} \right),
    }
with $I$ defined by the equations 
    \deq[eq:6d-ideal]{
      I = \langle \lambda_1^+ \lambda_3^- - \lambda_1^- \lambda_3^+, \lambda_1^+ \wedge \lambda_1^-\rangle.
    }
    We consider the filtration $\Tilde F^\bu K_Q^\bu$. Just as above, the first page of the corresponding spectral sequence $\tilde E$ simply eliminates $\theta_3^-$ and sets $\lambda_3^-$ to $-u$, restricting us to the complement of the hyperplane section $\lambda_3^- = 0$. 

    As above, the complement of this hyperplane section is in fact affine. After inverting $\lambda_3^-$, the first defining equation in~\eqref{eq:6d-ideal} can be rewritten as
    \deq{
      \lambda_1^+ = - u^{-1} \lambda_3^+ \lambda_1^-.
    }
    The second equation is then automatically satisfied, due to the fact that the symplectic contraction $\lambda_1^- \wedge \lambda_1^-$ is identically zero. We thus see that 
$\left( \Tilde E_1, \Tilde d_1 \right)$ is  
\deq{\left( \C[u,u^{-1}] \otimes \C[\lambda_3^+, \lambda_1^-] \otimes \C[\theta_3^+, \theta_1^-,\theta_1^+], \lambda_3^+ \pdv{ }{\theta_3^+} + \lambda_1^- \pdv{ }{\theta_1^-} - u^{-1} \lambda_3^+ \lambda_1^- \pdv{ }{\theta_1^+} \right).
    }
    From this point, it is straightforward to argue (for example by first taking the cohomology of $\lambda_3^- \partial/\partial\theta_3^+$) that the spectral sequence abuts to 
    \deq{
      H^\bu(K_Q (Y)) = \C[u,u^{-1}][\theta_1^+].
    }
    The cohomology of $\Gr \left( (u)^{-1} A_Q \right)$ is obtained by tensoring with~$C^\infty(\R^6)$. The map $\Phi$ clearly induces the isomorphism $\d\zbar \mapsto 2 \theta_1^+$. 
  \end{proof}
  \begin{proof}[Proof for $d=10$]
  Specializing to the case at hand, we can write the complex computing deformed Koszul homology explicitly in coordinates as follows:
\begin{equation}
  K^\bu_Q (Y)  = \left(\frac{\C[\lambda_1, \lambda_3, \lambda_5]}{I} \otimes \C[\theta_1, \theta_3, \theta_5] [u, u^{-1}], ~ 
\sum \lambda \pdv{ }{\theta} + u \pdv{ }{\theta_5}  \right) .
\end{equation}
Here, $I$ is the defining equation of the nilpotence variety, given by the equations
    \deq{
      I = \left\langle \lambda_1 \lambda_5 + \frac{1}{2} \lambda_3 \lambda_3 , \lambda_3 \lambda_1 \right\rangle .
    }
Now consider the auxiliary filtration $\Tilde F^\bu K^\bu_Q$. 
This induces a spectral sequence $\{\Tilde E_r\}$ abutting to the deformed Koszul homology,
whose first page is
\deq{
  \Tilde E_1 = H^\bu\left(K_Q (Y) , (\lambda_5 + u) \pdv{ }{\theta_5} \right).
}
Taking cohomology of this differential has the effect of setting $\lambda_2 = -u$, which is equivalent to inverting $\lambda_2$. 
Geometrically, this corresponds to restricting to the complement of the hyperplane section $\lambda_5 = 0$ of~$Y$ just as in the previous cases. 
  
Since $\lambda_5$ is now invertible, it follows from the first equation in~$I$ that
\[
\lambda_1 = \frac12 u^{-1} \lambda_3 \lambda_3
\]
The second equation is satisfied tautologically, just for type reasons (there is no intertwiner of $\SU(5)$ representations of appropriate symmetry, since the symmetric cube of the three-form contains no four-form). 

We therefore arrive at the description of the $\Tilde{E}_1$ page
\deq{
  \left( \Tilde E_1, \Tilde \d_1 \right) = \left(  \C[\lambda_3] \otimes \C[u, u^{-1}] [\theta_1, \theta_3] , ~ \lambda_3 \frac{\partial}{\partial \theta_3} \right).
}
It is trivial to see that the spectral sequence $\{\Tilde E_r\}$ thus abuts to the graded commutative algebra
\deq{
H^\bu(K_Q (Y)) = \C[u,u^{-1}, \theta_1]
}
The cohomology of $\Gr \left( (u)^{-1} A_Q \right)$ is thus the tensor product of this algebra with smooth functions on~$\R^4$. 
(We remark that the $E_2$ differential in the original spectral sequence is given by the only remaining component of $\cD_1$, which here is $2 u \theta_1 \cdot {\partial}/{\partial \zbar}$.)
The map $\Phi$ is the isomorphism that maps $f(z,\zbar) \d \zbar$ to $f(z,\zbar) \cdot 2\theta_1$ as desired.   
  
\end{proof}

\section{Six-dimensional $\cN = (2,0)$ supersymmetry}\label{sec:6d}
\subsection{Details on six-dimensional $\N=(2,0)$ supersymmetry}
\label{ssec:2,0details}

The six-dimensional $(\N,0)$ supertranslation algebra is
\deq{
  (S_+ \otimes R_n)[-1] \oplus V[-2],
}
where $S_+$ is a irreducible semi-spin representation and $V \cong \CC^6$ the (complex) vector representation of $\so(6,\CC)$,
and $R_n = (\C^{2n}, \omega)$ is a symplectic vector space. 
The case $\N=(1,0)$ was studied above; here, we will be interested exclusively in the case $\N=2$, and thus set $R = R_2$. The $R$-symmetry in this case is $\lie{sp}(2)$.

As reviewed above, there is an isomorphism $\wedge^2 S_+ \cong V$ of~$\lie{so}(6)$ modules. The bracket is given by the tensor product of this isomorphism with the symplectic form, viewed as an $\lie{sp}(2)$-equivariant map $\wedge^2 R \rightarrow \C$.

Just as in the $\N=(1,0)$ case, a holomorphic supercharge $Q \in Y$ is a rank-one element of~$S_+ \otimes R$, and specifies both a maximal isotropic subspace $L \subset V$ and a line $\rho \subset R$. 
As a $\lie{gl}(3) \times \lie{sp}(1) \times \lie{gl}(1)$-module, the odd part of the algebra $S_+ \otimes R$ decomposes as
\deq{
  \wedge^\text{even} L^\vee \otimes \left( \C^- \oplus R^\circ \oplus \C_+ \right). 
}
with $R^\circ \cong (\CC^2, \omega^\circ) = R /\!/ \rho$ a two-dimensional symplectic vector space, isomorphic to the symplectic reduction of~$R$ along~$\rho = \C_+$.

We will denote the generators by their form degree, together with labels indicating the $R$-symmetry type: as such, a basis of the odd part of the superalgebra consists of
\deq{
Q_0^+, Q_0^\circ, Q_0^-; \quad Q_2^+, Q_2^\circ, Q_2^-.}
The holomorphic supercharge is $Q_0^+ \in \wedge^0 L^\vee \otimes \rho$. The nontrivial brackets in the algebra are 
\begin{equation} \label{eq:brackets20}
  \begin{gathered}
  [Q_0^+, Q_2^-] = \bar{P}; \quad [Q_0^-, Q_2^+] = \bar{P}; \quad [Q_0^\circ, Q_2^\circ] = \bar{P}; \\
  [Q_2^+, Q_2^-] = P ; \quad [Q_2^\circ, Q_2^\circ] = P.
\end{gathered}
\end{equation}
In the last commutator, a contraction with the symplectic form~$\omega^\circ$ is understood. 

From this computation, it is clear that the algebraic tangent space $T_{Q_0^+}Y$ consists of all supercharges other than $Q_2^-$, which (just as above) spans the normal bundle and witnesses the nullhomotopy of~$\bar{P}$. However, not all of these linear-order deformations integrate to finite deformations; the obstruction (responsible for the singularity of $Y$ at $Q_0^+$) is provided by the nontrivial self-bracket of $Q_2^\circ$. A deformation by such an element is unobstructed precisely when it is of rank one as an element of $\wedge^{2}L^\vee \otimes R^\circ$.

As above, we choose a coordinate $\lambda$ on the odd part of the supertranslation algebra, and write down explicit defining equations for the nilpotence variety. Using the brackets~\eqref{eq:brackets20}, it is straightforward to compute that these are given by
\begin{equation}\label{eqn:6d20ps}
  \lambda_3^-\lambda_1^+ + \lambda_3^+\lambda_1^- + \omega^\circ(\lambda_3^\circ, \lambda_1^\circ) = 0, \qquad 
  \lambda_1^-\lambda_1^+ + \frac{1}{2} \omega^\circ(\lambda_1^\circ, \lambda_1^\circ) = 0.
\end{equation}
We can also understand the structure of the dg Lie algebra $\lie{p}_Q$. 

\begin{prop}
As a cochain complex, the dg Lie algebra $\fp_Q$ is
\begin{equation}
  \begin{tikzcd}[row sep = 1 ex]
    0 & 1 & 2 \\ \hline \\
    \wedge^2 L^\vee \ar[r] & \wedge^2 L^\vee \otimes \rho \\
    \lie{sl}(L)  & \wedge^2 L^\vee \otimes \rho^\vee \ar[r] & L\\
    \C \ar[r] & \wedge^0 L^\vee \otimes \rho & L^\vee \\
    \lie{gl}(\rho) \ar[ru] \\
  (\rho^\vee)^{\otimes2} \ar[r] & \wedge^0 L^\vee \otimes \rho^\vee \\
  \rho^\vee \otimes R^\circ \ar[r] & \wedge^0 L^\vee \otimes R^\circ\\
  \lie{sp}(R^\circ)  & \wedge^2 L^\vee \otimes R^\circ \\
  \rho \otimes R^\circ \\
    \wedge^2 L \oplus \rho^{\otimes 2}.
  \end{tikzcd}
\end{equation}
The positively-graded piece of~$H^\bu(\lie{t}_Q)$ is isomorphic to 
\deq{
  (L \otimes R^\circ)[-1] \oplus \wedge^2 L[-2];
}
the bracket is the tensor product of wedging on~$L$ with the symplectic form $\omega^\circ$. 
The full $H^\bu(\ft_Q)$ is extended by the direct sum of the parabolic Lie algebras $(\lie{sl}(L) \oplus \wedge^2 L)$ and $(\lie{sp}(R^\circ) \oplus (\rho \otimes R^\circ) \oplus \rho^{\otimes 2})$ in degree zero, together with an additional $\lie{gl}(1)$ factor whose weights are determined by the kernel condition. 
\end{prop}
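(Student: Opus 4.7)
The plan is to verify the cochain complex displayed in the proposition summand by summand, then read off the cohomology and its induced bracket from the explicit formulas in~\eqref{eq:brackets20}.

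First, I decompose $\fp$ as a representation of the stabilizer $\lie{gl}(L) \times \lie{gl}(\rho) \times \lie{sp}(R^\circ)$ of the pair $(L,\rho)$ specified by $Q_0^+$. The parabolic decomposition of $\lie{so}(6)$ is $\wedge^2 L \oplus \lie{gl}(L) \oplus \wedge^2 L^\vee$, and that of $\lie{sp}(R)$ is $\rho^{\otimes 2} \oplus (\rho \otimes R^\circ) \oplus \lie{sp}(R^\circ) \oplus \lie{gl}(\rho) \oplus (\rho^\vee \otimes R^\circ) \oplus (\rho^\vee)^{\otimes 2}$; after splitting the central line out of $\lie{gl}(L)$, these give precisely the summands listed in column zero of the diagram. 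The odd part $S_+ \otimes R = (\wedge^0 L^\vee \oplus \wedge^2 L^\vee) \otimes (\rho \oplus \rho^\vee \oplus R^\circ)$ gives the six spinor summands in column one, and $V = L \oplus L^\vee$ gives column two.

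Next, I compute $\ad_{Q_0^+}$ on each summand. On $\fp_0$, only those pieces that fail to stabilize $Q_0^+ \in \wedge^0 L^\vee \otimes \rho$ act nontrivially: $\wedge^2 L^\vee \subset \lie{so}(6)$ wedges $\wedge^0$ into $\wedge^2$; the central line of $\lie{gl}(L)$ and the line $\lie{gl}(\rho)$ each scale $Q_0^+$, mapping into the one-dimensional $\wedge^0 L^\vee \otimes \rho$; and $(\rho^\vee)^{\otimes 2}$ and $\rho^\vee \otimes R^\circ$ pair with the $\rho$-factor of $Q_0^+$ to produce the arrows into $\wedge^0 L^\vee \otimes \rho^\vee$ and $\wedge^0 L^\vee \otimes R^\circ$ respectively. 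On the spinor summands, the brackets~\eqref{eq:brackets20} show that only $Q_2^- \in \wedge^2 L^\vee \otimes \rho^\vee$ satisfies both the nonvanishing condition on $\omega$ (since $\omega(\rho,\rho) = \omega(\rho, R^\circ) = 0$) and the Clifford nonvanishing condition, giving the arrow into the $L \subset V$ summand; all other spinor summands lie in $\ker \ad_{Q_0^+}$.

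The cohomology is then obtained by reading off kernels and cokernels. In degree zero the kernel consists of $\lie{sl}(L), \wedge^2 L, \lie{sp}(R^\circ), \rho \otimes R^\circ$, and $\rho^{\otimes 2}$, together with the one-dimensional kernel of the combined scaling map $\C \oplus \lie{gl}(\rho) \to \wedge^0 L^\vee \otimes \rho$; this extra factor is the $\lie{gl}(1)$ of the proposition, whose weight is fixed by the kernel relation. In degree one, the only summand of $S_+ \otimes R$ that is in the kernel but not in the image is $\wedge^2 L^\vee \otimes R^\circ$, which yields $L \otimes R^\circ$ upon applying the Calabi--Yau identification $\wedge^2 L^\vee \cong L$. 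In degree two, the image of $\wedge^2 L^\vee \otimes \rho^\vee \to V$ is exactly the $L$-summand (the antiholomorphic translations), leaving the cokernel $L^\vee \cong \wedge^2 L$.

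Finally, to compute the induced bracket on positive-degree cohomology, I lift representatives back to $\ft$ and apply~\eqref{eq:brackets20}: elements of $L \otimes R^\circ$ in degree one lift to $Q_2^\circ \in \wedge^2 L^\vee \otimes R^\circ$, whose self-bracket is $\Gamma_{\Omega^1}(Q_2, Q_2') \cdot \omega^\circ(r,r')$. Since the restriction of $\Gamma_{\Omega^1}$ to $\wedge^2(\wedge^2 L^\vee)$ is the wedge projection landing in $L^\vee \cong \wedge^2 L$, the induced bracket on cohomology is $(l \otimes r) \otimes (l' \otimes r') \mapsto (l \wedge l') \cdot \omega^\circ(r, r')$, exactly the tensor product of wedging on $L$ with the symplectic form on $R^\circ$. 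The main obstacle throughout is representation-theoretic bookkeeping—in particular tracking the Calabi--Yau identifications $\wedge^2 L^\vee \cong L$ and $\wedge^2 L \cong L^\vee$—after which everything reduces to a direct summand-by-summand application of~\eqref{eq:brackets20}.
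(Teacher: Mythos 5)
Your proof is correct and follows the same approach the paper implicitly takes (the proposition is stated without an explicit proof, but the general pattern for such computations is laid out in the paper's discussion of holomorphic decompositions): decompose $\fp$ into irreducibles for the stabilizer $\lie{gl}(L) \times \lie{gl}(\rho) \times \lie{sp}(R^\circ)$, evaluate $\ad_{Q_0^+}$ on each summand using the explicit brackets, then read off kernel-mod-image together with the induced bracket from a choice of lift. The dimension counts and the Calabi--Yau identifications $\wedge^2 L^\vee \cong L$, $L^\vee \cong \wedge^2 L$ are handled correctly.
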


The nilpotence variety $Y_Q$ is therefore the space of rank-one two-by-three complex matrices, cut out by three quadratic equations in~$\C^{2\times 3} \cong L \otimes R^\circ.$

Let us spell out the $\cN=(2,0)$ tautological filtered cdga. 
As a graded commutative superalgebra, 
\deq{
  A = C^\infty (\CC^3)[\theta^\pm_1, \theta^\pm_3, \theta_1^\circ, \theta_3^\circ] \otimes \left(\frac{\CC[\lambda^\pm_1, \lambda^\pm_3, \lambda_1^\circ, \lambda_3^\circ]}{I}\right) .
}
The differential $\cD$ has the form
\begin{multline}
  \cD   =  - \lambda^+_3 \left(\frac{\partial}{\partial \theta^+_3} - \theta^-_1 \frac{\partial}{\partial \zbar} \right) - \lambda^+_1 \left(\frac{\partial}{\partial \theta^+_1} - \theta^-_3 \frac{\partial}{\partial \zbar} - \theta^-_1 \frac{\partial}{\partial z} \right) 
    + \lambda^-_3 \left(\frac{\partial}{\partial \theta^-_3} - \theta^+_1 \frac{\partial}{\partial \zbar} \right) \\ 
   + \lambda^-_1 \left(\frac{\partial}{\partial \theta^-_1} - \theta^+_3 \frac{\partial}{\partial \zbar} - \theta^+_1 \frac{\partial}{\partial z} \right) 
    +  \lambda^\circ_3 \left(\frac{\partial}{\partial \theta^\circ_3} - \theta^\circ_1 \frac{\partial}{\partial \zbar} \right) 
   + \lambda^\circ_1 \left(\frac{\partial}{\partial \theta^\circ_1} - \theta^\circ_3 \frac{\partial}{\partial \zbar} - \theta^\circ_1 \frac{\partial}{\partial z} \right) .
 \end{multline}
The holomorphic supercharge $Q$ acts on $A$ by the operator
\deq{
\cQ = \frac{\partial}{\partial \theta_3^-} + \theta_1^+ \frac{\partial}{\partial \zbar} .
}
As above, we are mostly interested in the twisted tautological cdgsa $A_Q$.

\subsection{The holomorphic twist}

In \S\ref{sec:smoothpure}, we defined $\Dol(X)$ as a filtered cdgsa.
We can extend this to a local cdgsa structure on the {\em full} Hodge filtration in the following fashion: we define 
\deq{
  \deRham(X) = \left( \Omega^{\bu,\bu}(X)[u], \del + u \dbar \right).
}
To view this as a cdgsa, the assignment of degrees is given by
\[
  |\d z_i| = (1, +) , \quad | \d \zbar_i | = (0, -), \quad |u| = (1,-).
\]
We can furthermore view this as a filtered cdgsa by defining 
\deq{
  F^\ell \deRham(X) = \bigoplus_{2i + j + k \geq \ell} u^k \Omega^{i,j}(X).
}
Note that $\d z$ is placed in filtration degree \emph{two}, so that the full differential $\del + u\dbar$ appears on the $E_2$ page of the  associated spectral sequence.

In similar fashion, we can define $\Dol^{\leq i}(X)$ to be the filtered cdgsa arising as the quotient of~$\deRham(X)$ by the ideal
\[
  \left(\Omega^{>i, \bu}(X) [u] [-i-1] , \partial + u \dbar\right) \subset \deRham(X).
\]

In the context of this section, we will be interested in the case $X = \C^3$ and $i=1$. 
We introduce the $\Dol^{\leq 1}(\C^3)$-module  
given by $\Dol(\C^3) \otimes \Pi R^\circ[-1]$, 
where $R^\circ$ is a symplectic vector space which we view as carrying filtration degree four. 
Here, elements $u^k f_I(z,\zbar) \d \zbar_I$ act by wedge product and elements $u^k f_{I,j}(z,\zbar) \d \zbar_I \d z_j$ act trivially. 
We can form the square-zero extension 
\deq{
B =   \Dol^{\leq 1}(\C^3) \ltimes \left( \Dol(\C^3)\otimes \Pi R^\circ[-1]  \right) 
}
of $\Dol^{\leq 1}(\C^3)$ by this module; this filtered cdgsa will play an important role in what follows.

\begin{athm}\label{thm:a6d}
There is a $\GL(3 ; \CC)$-equivariant quasi-isomorphism 
\deq{
  \Phi: (u)^{-1} B \rightarrow (u)^{-1} A_Q
}
of filtered local cdgsa's on $\CC^{3}$.
\end{athm}

The map $\Phi: (u)^{-1} B \to (u)^{-1} A_Q$ is defined on generators as follows:
\beqn
\begin{aligned}
 z & \mapsto  z + \theta^-_1 \theta^+_1 + 2 u^{-1} \theta^- (\lambda_1^- \theta_3^+ - \lambda_1^+ \theta_3^- + \lambda_1^\circ \theta_3^\circ) + u^{-1} \theta_1^\circ (\lambda_1^\circ \theta_3^- + \lambda_1^- \theta_3^\circ)   \\
 \zbar &\mapsto \zbar + \theta^+_1 \theta^-_3 + \theta^-_1 \theta^+_3 + \theta_1^\circ \theta_3^\circ \\
 \d z & \mapsto  - u^{-1} (\lambda_3^- \lambda_1^\circ + \lambda_1^- \lambda_3^\circ) \theta_1^\circ \\
 \d \zbar & \mapsto  2 u^{-1} (\lambda_3^+ \theta_1^- - \lambda^-_3 \theta_1^+ + \lambda_3^\circ \theta_1^\circ) \\
 1 \otimes r^\circ & \mapsto  - u^{-1} (\theta_3^- \lambda_1^\circ + \lambda_1^- \theta_3^\circ) (\lambda_3^- \lambda_1^\circ + \lambda_1^- \lambda_3^\circ) \theta_1^\circ .
\end{aligned}
\eeqn

\begin{alem}
$\Phi$ is a map of filtered cdgsa's. 
\label{lemmaB:6d2,0}
\end{alem}

\begin{proof}
We need to show that $\Phi$ is a map of cochain complexes and that it is compatible with the algebra structure. 
We begin by showing that $\Phi$ is compatible with the differentials.

Consider the holomorphic coordinate $z$. 
Write $\Phi (z) = \phi_0 (z) + u^{-1} \phi_{-1} (z)$ where 
\begin{align*}
\phi_0(z) &= z + \theta_1^- \theta_1^+ \\
\phi_{-1} (z) & = 2 \theta^- (\lambda_1^- \theta_3^+ - \lambda_1^+ \theta_3^- + \lambda_1^\circ \theta_3^\circ) + \theta_1^\circ (\lambda_1^\circ \theta_3^- + \lambda_1^- \theta_3^\circ) .
\end{align*}
Notice that the total differential applied to $\Phi(z)$ has the following form
\begin{equation}\label{6d20a}
(\cD + u \cQ) \Phi(z) = u \cQ \phi_0(z) + (\cD \phi_0(z) + \cQ \phi_{-1}(z)) + u^{-1} \cD \phi_{-1} (z) .
\end{equation}
Clearly the term proportional to $u$ vanishes. 
Next, notice that
\begin{align*}
\cD \phi_0 (z) & = 2 \lambda_1^+ \theta_1^- + \lambda^\circ \theta_1^\circ \\
\cQ \phi_{-1} (z) & = - 2 \lambda_1^+ \theta_1^- - \lambda^\circ \theta_1^\circ
\end{align*}
so that the term in \eqref{6d20a} proportional to $u^0$ vanishes. 

Finally the term in \eqref{6d20a} proportional to $u^{-1}$ is
\begin{align*}
\cD \phi_{-1}(z) & = 2 \lambda_1^- \lambda_1^- \theta_3^+ - (\lambda_1^- \lambda_1^+ \pm \omega^\circ (\lambda_1^\circ, \lambda_1^\circ)) \theta_3^- \\ & + (\lambda_1^+ \lambda_3^- - \lambda_1^- \lambda_3^+ + \lambda_1^\circ \lambda_3^\circ) \theta_1^- + (\lambda_1^- \lambda_1^\circ \pm \lambda_1^- \lambda_1^\circ) \theta_3^\circ \\ & - (\lambda_3^- \lambda_1^\circ + \lambda_1^- \lambda_3^\circ) \theta_1^\circ .
\end{align*}
Clearly the term proportional to $\theta_3^\circ$ vanishes; the term proportional to~$\theta_3^+$ vanishes by symmetry. 
The terms proportional to $\theta_3^-, \theta_1^-$ both vanish by the pure spinor constraints~\eqref{eqn:6d20ps}.
The remaining term is $\Phi ( \d z)$, as desired.

Next, we consider the anti-holomorphic coordinate $\zbar$. 
We record the following identities:
\begin{itemize}
\item $\cD (\zbar) = \lambda_3^+ \theta_1^- + \lambda_1^+ \theta_3^- - \lambda^-_3 \theta_1^+ - \lambda_1^- \theta_3^+ + \lambda_3^\circ \theta_1^\circ  + \lambda_1^\circ \theta_3^\circ$. 
\item $\cD(\theta_1^+ \theta_3^-) = - \lambda_1^+ \theta_3^- - \lambda_3^- \theta_1^+$. 
\item $\cD(\theta_1^- \theta_3^+) = \lambda_1^- \theta_3^+ + \lambda_3^+ \theta_1^-$. 
\item $\cD (\theta_1^\circ \theta_3^\circ) =  \lambda_3^\circ \theta_1^\circ - \lambda_1^\circ \theta_3^\circ$. 
\item $\cQ (\zbar) = \theta_1^+$. 
\item $\cQ (\theta_1^+\theta_3^- + \theta_1^- \theta_3^+ + \theta_1^\circ \theta_3^\circ) = - \theta_1^+$. 
\end{itemize}

Using these identities we compute the total differential acting on $\Phi (\zbar) = \zbar + \theta_1^+ \theta_3^- + \theta_1^-\theta_3^+ - \theta_1^\circ \theta_3^\circ$. 
\[
  \left( \cD + u \cQ \right)  \Phi (\zbar) = 2 (\lambda_3^+ \theta_1^- - \lambda^-_3 \theta_1^+ + \lambda_3^\circ \theta_1^\circ) .
\]
The right-hand side is precisely $\Phi (u \d \zbar)$ as desired. 

Finally, we consider the element $r^\circ \in B$.
Notice that
\[
\Phi (r^\circ) = (\theta_3^- \lambda_1^\circ + \lambda_1^- \theta_3^\circ) \Phi (\d z) .
\]
Thus
\[
(\cD + u \cQ) \Phi(r^\circ) = (\lambda_3^- \lambda_1^\circ + \lambda_1^- \lambda_3^\circ) \Phi(\d z) .
\] 
The following lemma implies $(\cD + u \cQ) \Phi(r^\circ) = 0$.

\begin{lem}\label{lem:6d20a}
$(\lambda_3^- \lambda_1^\circ + \lambda_1^- \lambda_3^\circ)^2 = 0 .$
\end{lem}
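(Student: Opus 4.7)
The plan is to expand $(\lambda_3^- \lambda_1^\circ + \lambda_1^- \lambda_3^\circ)^2$ in the polynomial ring and reduce each resulting summand using the two pure spinor relations from~\eqref{eqn:6d20ps}. Set $X := \lambda_3^- \lambda_1^\circ + \lambda_1^- \lambda_3^\circ$, viewed as an $L^\vee \otimes R^\circ$-valued element after trivializing $\det L^\vee$ by the Calabi--Yau form. The ``square'' here is understood in the sense dictated by the contractions appearing in $\Phi(\d z)$: wedge in the $L^\vee$ indices together with the symplectic pairing $\omega^\circ$ on the $R^\circ$ indices.

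First I would discard two of the four schematic summands of $X^2$ by elementary vanishings: the piece bilinear in $\lambda_1^-$ vanishes because $\lambda_1^- \wedge \lambda_1^- = 0$ (a single $L^\vee$-valued variable wedged with itself), and the piece bilinear in $\hat\lambda_3^\circ$ vanishes because $\omega^\circ$ is antisymmetric when paired against the symmetric square of the commuting variable $\hat\lambda_3^\circ$.

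Next, the two surviving pieces are $(\hat\lambda_3^-)^2\, \omega^\circ(\lambda_1^\circ, \lambda_1^\circ)$ and the cross term $2 \hat\lambda_3^- \lambda_1^- \wedge \omega^\circ(\hat\lambda_3^\circ, \lambda_1^\circ)$. For the first I would apply constraint~(B) of~\eqref{eqn:6d20ps}, which gives $\omega^\circ(\lambda_1^\circ, \lambda_1^\circ) \equiv -2\, \lambda_1^- \wedge \lambda_1^+ \pmod{I}$. For the second I would apply constraint~(A) of~\eqref{eqn:6d20ps}, which gives $\omega^\circ(\hat\lambda_3^\circ, \lambda_1^\circ) \equiv -\hat\lambda_3^- \lambda_1^+ - \hat\lambda_3^+ \lambda_1^- \pmod{I}$; the $\hat\lambda_3^+ \lambda_1^-$ contribution then produces a wedge of $\lambda_1^-$ with itself and drops out, leaving only a term proportional to $(\hat\lambda_3^-)^2\, \lambda_1^+ \wedge \lambda_1^-$. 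The claim is that the two surviving $(\hat\lambda_3^-)^2 (\lambda_1^+ \wedge \lambda_1^-)$ contributions occur with matching coefficients and cancel; in particular, the factor of $\tfrac{1}{2}$ in constraint~(B) is precisely the coefficient needed to effect this cancellation.

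The main obstacle is the sign and coefficient bookkeeping through the various antisymmetrizations (wedge in $L^\vee$) and symplectic contractions (via $\omega^\circ$ in $R^\circ$). The cleanest route is to fix an explicit basis with $\omega^\circ_{12} = 1$, write $X^\alpha_i = \hat\lambda_3^- \lambda_1^{\circ,\alpha}_i + \lambda_1^-_i \hat\lambda_3^{\circ,\alpha}$, expand $\omega^\circ_{\alpha\beta}(X^\alpha \wedge X^\beta)_{ij}$ term by term, and apply~(A) and~(B) component-wise; all other steps are then bookkeeping. This same pattern of reduction---using~(A) to convert symplectic pairings of mixed $\circ$-type into $\pm$-type, then using~(B) to convert the result into a wedge that itself vanishes by either $\lambda_1^- \wedge \lambda_1^-$ or a cancellation against the direct $(\lambda_1^\circ)^2$ contribution---is precisely what appears throughout the preceding proof of Lemma~\ref{lemmaB:6d2,0}.
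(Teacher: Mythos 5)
Your approach is essentially the same as the paper's: expand the square, reduce both surviving pieces via the pure spinor constraints~\eqref{eqn:6d20ps}, and observe the cancellation. Two small points of imprecision. First, your ``two of the four schematic summands'' to discard are in fact a single term: the piece quadratic in $\lambda_1^-$ and the piece quadratic in $\lambda_3^\circ$ are both $(\lambda_1^- \lambda_3^\circ)^2$, which vanishes for either of the reasons you give; only $aa$ and the (doubled) cross term survive, exactly as in the paper's expansion $(\lambda_3^- \lambda_1^\circ)^2 + (\lambda_1^- \lambda_3^\circ)^2 + 2\lambda_3^-\lambda_1^\circ\lambda_3^\circ\lambda_1^-$. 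Second, and more substantively, the crux of the lemma is the asserted cancellation in your final step, and you should actually carry it out rather than stop at ``the claim is that \dots the coefficients match.'' The paper does supply the intermediate line that a reader can check: applying~(B) to $(\lambda_3^-\lambda_1^\circ)^2$ and~(A) to the cross term yields $-(\lambda_3^-)^2\lambda_1^+\lambda_1^- + \lambda_3^-\lambda_1^-(\lambda_1^+\lambda_3^- + \lambda_1^-\lambda_3^+)$, after which the $\lambda_1^-\lambda_1^-$ term dies and the remaining two combine to zero. You have the right ingredients and the right ordering of moves; just finish the coefficient/sign bookkeeping, being careful about the Koszul signs from wedging the $L^\vee$-degree~$1$ and degree~$3$ components (these conspire with the antisymmetry of $\omega^\circ$ to make the mixed contraction symmetric, which is what makes the two surviving contributions come out with opposite signs after using~(A) and~(B)).
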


\begin{proof}
This is a direct calculation. 
One has
\begin{align*}
(\lambda_3^- \lambda_1^\circ + \lambda_1^- \lambda_3^\circ)^2 & = (\lambda_3^- \lambda_1^\circ)^2 + (\lambda_1^- \lambda_3^\circ)^2 + 2 \lambda_3^- \lambda_1^\circ \lambda_3^\circ \lambda_1^- \\ & = - (\lambda_3^-)^2 \lambda_1^+ \lambda_1^- + \lambda_3^- \lambda_1^- (\lambda_1^+ \lambda_3^- + \lambda_1^- \lambda_3^+) \\ & = 0
\end{align*}
The second line follows from the pure spinor constraints \eqref{eqn:6d20ps}. 
The third line follows from the fact that $\lambda_1^- \lambda_1^- = 0$ by symmetry.
\end{proof}

Finally, we show that $\Phi$ defines a map of algebras. 
It suffices to verify the following relations:
\begin{itemize}
\item[(a)] $\Phi (\d z)^2 = 0$. 
\item[(b)] $\Phi( \d z) \Phi (r^\circ) = 0$. 
\item[(c)] $\Phi(r^\circ)^2 = 0$. 
\end{itemize}

Each of these relations hold by applying Lemma \ref{lem:6d20a}. 
Indeed, relation (a) is equivalent to
\[
 (\lambda_3^- \lambda_1^\circ + \lambda_1^- \lambda_3^\circ)^2 (\theta_1^\circ)^2  = 0 .
\]
Relation (b) is equivalent to 
\[
 (\theta_3^- \lambda_1^\circ + \lambda_1^- \theta_3^\circ)  (\lambda_3^- \lambda_1^\circ + \lambda_1^- \lambda_3^\circ)^2  = 0.
\]
Finally, relation (c) is equivalent to 
\[
(\theta_3^- \lambda_1^\circ + \lambda_1^- \theta_3^\circ)^2  (\lambda_3^- \lambda_1^\circ + \lambda_1^- \lambda_3^\circ)^2 = 0 .
\]
This completes the proof.
\end{proof}

\begin{alem}
  The map $\Phi$ induces an isomorphism of $E_1$ pages
  \deq{
    \Phi: \Gr \left( (u)^{-1} B \right)  \xto{\cong} H^\bu \left( \Gr \left( (u)^{-1} A_Q \right) \right).
  }
  \begin{proof}
    The structure is completely analogous to the proof of the smooth Lemma~C above, with the new ingredient that the deformed Koszul homology is now more interesting. 
    With the help of the description reviewed above in~\S\ref{ssec:2,0details}, we can describe this complex explicitly:
    \deq{
      K_Q^\bu (Y) = \left( \frac{\C[\lambda_1^\pm, \lambda_1^\circ,\lambda_3^\pm,\lambda_3^\circ]}{I}\otimes  \C[\theta_1^\pm, \theta_1^\circ,\theta_3^\pm,\theta_3^\circ][u,u^{-1}], ~  \sum \lambda \pdv{ }{\theta}  + u \pdv{ }{\theta_3^-} \right).
    }
    Here, the defining equations of the ideal $I$ are 
    \deq[eq:2,0ideal]{
      I = \left\langle   \lambda_3^- \lambda_1^+ + \lambda_3^+ \lambda_1^- + \omega(\lambda_3^\circ, \lambda_1^\circ) , 
      \lambda_1^- \lambda_1^+ + \frac{1}{2}\omega^\circ(\lambda_1^\circ, \lambda_1^\circ) \right\rangle .
    }
    We consider the auxiliary filtration $\Tilde F^\bu K_Q^\bu$.
    Passing to the first page of the induced spectral sequence again amounts to setting $\lambda_3^- = - u$, and thus to restricting to the complement of the hyperplane section $\lambda_3^- = 0$. Having done this, we can solve for $\lambda_1^+$ using the first of the defining equations in~\eqref{eq:2,0ideal}: 
    \deq[eq:2,0sol]{
      \lambda_1^+ = u^{-1} \left( \lambda_3^+ \lambda_1^- + \omega^\circ(\lambda_3^\circ, \lambda_1^\circ)  \right).
    }
    The structure of the computation now exhibits a new feature, compared with the examples in the previous section: Because the holomorphic supercharge no longer defines a smooth point of~$Y$, the second set of defining equations are not tautologically satisfied. Indeed, substituting~\eqref{eq:2,0sol} into the second equation produces
    \deq{
      \lambda_1^- u^{-1} \left( \lambda_3^+ \lambda_1^- + \omega^\circ(\lambda_3^\circ, \lambda_1^\circ) \right)  + \frac{1}{2} \omega^\circ(\lambda_1^\circ, \lambda_1^\circ) = 0.
    }
    The first of these terms vanishes for type reasons, since the required contraction of Lorentz representations is antisymmetric. If we define
    \deq[eq:lambdatilde]{
      \tilde \lambda_1^\circ = \lambda_1^\circ + u^{-1} \lambda_3^\circ \lambda_1^{-},
    }
    the remaining equations can be written in the form 
    \deq{
      \omega^\circ(\tilde\lambda_1^\circ, \tilde\lambda_1^\circ) = 0.
    }
    These are the defining equations for the Segre variety of two-by-three matrices of rank one, which (as recalled above) is the nilpotence variety $Y_Q$ of $H^\bu(\lie{p}_Q)$. Note that the quadratic term in $\lambda_3^\circ \lambda_1^{-}$ vanishes, since this expression is automatically of rank one. 
We will denote the ideal generated by these equations by~$J$.
    In light of this computation, we have 
    \deq{
      \Tilde E_1 =  \frac{\C[\tilde\lambda_1^\circ]}{J}\otimes \C [\lambda_1^-,\lambda_3^\circ,\lambda_3^+][\theta_1^\pm,\theta_1^\circ,\theta_3^+,\theta_3^\circ][u,u^{-1}],
    }
    equipped with the differential
\begin{multline}
      \Tilde \d_1 = \lambda_1^- \pdv{ }{\theta_1^-} + \lambda_3^\circ \pdv{ }{\theta_3^\circ} + \lambda_3^+ \pdv{ }{\theta_3^+} \\ + u^{-1} \left( \lambda_3^+ \lambda_1^- + \omega(\lambda_3^\circ, \lambda_1^\circ)  \right) \pdv{ }{\theta_1^+} + \left( \lambda_1^\circ - u^{-1} \lambda_3^\circ \lambda_1^{-} \right) \pdv{ }{\theta_1^\circ}.
\end{multline}
Contracting the acyclic pieces of the differential, it follows immediately that the spectral sequence abuts to 
\deq{
  H^\bu(K_Q(Y) ) \cong \C[u,u^{-1}][\theta_1^+] \otimes_\C H^\bu(K(Y_Q)).
 }
\begin{rmk}
The complex $K_Q^\bu(Y)$ should {\em not} be confused with the complex $K^\bu (Y_Q)$. 
The first is obtained by deforming the original Koszul complex of $Y$ by the differential $u \partial_{\theta_3^-}$.
The latter is the Koszul complex of the nilpotence variety $Y_Q$ associated to the graded Lie algebra $\fp_Q$. 
\end{rmk}

 The cohomology of~$\Gr\left( (u)^{-1} A_Q \right)$ is obtained by taking the tensor product with smooth functions on~$\R^6$. 
  \end{proof}
\end{alem}

\begin{rmk}
We note that the proposition immediately implies that 
\deq{
  \Gr \left( (u)^{-1} B \right) \cong \Gr \left( (u)^{-1} \Dol(\R^6) \right) \otimes_\C H^\bu(K(Y_Q)).
}
In fact, because it implies an isomorphism of spectral sequences, it establishes the stronger result that
\deq{
  (u)^{-1} B \cong (u)^{-1} \Dol(\R^6) \otimes_\C H^\bu(K(Y_Q)),
}
where the differential on the right-hand side contains additional terms (corresponding to the $\del$ differential in~$B$) that arise from the nontrivial brackets in the algebra $H^\bu(\lie{p}_Q)$. 
This shows that we can think of $B$ as being the tautological filtered cdga associated to~$H^\bu(\lie{p}_Q)$. 
\end{rmk}

\section{Eleven-dimensional supersymmetry}
\label{sec:11d}
\subsection{Details on eleven-dimensional supersymmetry}
The eleven-dimensional supertranslation algebra is
\deq{
  \lie{t} = S[-1] \oplus V[-2],
}
where $S$ is the (unique) spin representation and $V \cong \C^{11}$ the complex vector representation, of~$\lie{so}(11, \CC)$. 
The bracket is the unique surjective $\lie{so}(11,\CC)$-equivariant map from the symmetric square of~$S$ to~$V$;
this decomposes into three irreducibles, 
\deq{
  \Sym^2(S) \cong V \oplus \wedge^2 V \oplus \wedge^5 V.
}
The $R$-symmetry is trivial.

Choosing a maximal isotropic subspace $L \subset V$, we can decompose the algebra into $\lie{sl}(5)$ representations, using the isomorphisms
\deq{
  V = L \oplus L^\vee \oplus \C, \qquad S = \wedge^\bu L^\vee.
}
(Recall from above that the copy of $\C$ in the decomposition of~$V$ can be identified with~$L^\perp/L$.) Recalling the construction of Clifford multiplication, we can then write the brackets in the form
\begin{equation}
  \begin{aligned}
  [Q_0, Q_4] = [Q_1, Q_3] = [Q_2, Q_2] &= \bar{P}; \\
  [Q_0, Q_5] = [Q_1, Q_4] = [Q_2, Q_3] &= H; \\
  [Q_1, Q_5] = [Q_2, Q_4] = [Q_3, Q_3] &= P.
\end{aligned}
\end{equation}
From here, it follows immediately that the nilpotence variety is cut out by the following equations
\begin{align*}
\lambda_0 \lambda_4 + \lambda_1 \lambda_3 + \frac{1}{2} \lambda_2^2 & = 0,\\
\lambda_0 \lambda_5 + \lambda_1 \lambda_4 + \lambda_2 \lambda_3 & = 0,\\
\lambda_1 \lambda_5 + \lambda_2 \lambda_4 + \frac{1}{2} \lambda_3^2 & = 0.
\end{align*}
Denote by $I$ the idea generated by these equations.

This space was used in the eleven-diomensional pure spinor superfield formalism~\cite{Ced-towards,Ced-11d}; see also~\cite{character}. A new feature in eleven dimensions is that---despite the absence of $R$-symmetry---the nilpotence variety is nevertheless singular and contains strata corresponding to different possible twists. From the above decomposition, it is clear that any holomorphic supercharge $Q \in  \wedge^0 L^\vee$ also pairs with itself to zero in $\wedge ^2 V$. Other supercharges exist for which this two-form pairing does not vanish, even though $[Q,Q] = 0$. Such supercharges are well-defined on manifolds of $SU(2) \times G_2$ holonomy; for details, we refer to~\cite{Awada,Duff,Joyce}. 

   The structure of the singularities is easiest to understand, as in other cases, by considering the dg Lie algebra $\lie{p}_Q$. 
   
\begin{prop}
As a cochain complex, the dg Lie algebra $\fp_Q$ is
\[
    \begin{tikzcd}[row sep = 1 ex]
    0 & 1 & 2 \\ \hline \\ 
    L \otimes \C & \wedge^5 L^\vee \ar[r] & \C \\
    \wedge^2 L & \wedge^4 L^\vee \ar[r] & L\\
    \lie{sl}(L)  & \wedge^3 L^\vee & L^\vee \\
    \wedge^2 L^\vee \ar[r] & \wedge^2 L^\vee \\
    L^\vee \otimes \C \ar[r] & \wedge^1 L^\vee \\
    \lie{gl}(1) \ar[r] & \wedge^0 L^\vee.
  \end{tikzcd}
\]
The positively graded piece of~$H^\bu(\lie{p}_Q)$ is $\wedge^2 L[-1] \oplus \wedge^4 L[-2]$, with bracket just defined by the wedge product. 
The full $H^\bu(\fp_Q)$ is extended by the parabolic Lie algebra $\lie{sl}(L) \oplus L \oplus \wedge^2 L$ in degree zero.
\end{prop}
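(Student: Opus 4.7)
The plan is to read off the differential $\ad_{Q_0}$ from the explicit decomposition of $\mathfrak{p}$ under the subgroup $\mathfrak{gl}(L) \subset \mathfrak{so}(11)$ determined by the maximal isotropic $L \subset V$, check the cochain table, and then extract both the cohomology and the surviving bracket by direct inspection.

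First I would compute the arrows from degree one to degree two. Since $Q_0 \in \wedge^0 L^\vee$, the only supercharges $Q_k \in \wedge^k L^\vee$ that can pair nontrivially with $Q_0$ under the $\Gamma$-pairing are those satisfying $k+0 \in \{4,5,6\}$, namely $k=4$ (giving $\bar P \in L$) and $k=5$ (giving $H \in L^\perp/L \cong \CC$); all other $[Q_0,Q_k]$ vanish for degree reasons. This is exactly the content of the two rightmost arrows in the diagram, and a short count of dimensions ($\binom{5}{4}=5=\dim L$ and $\binom{5}{5}=1=\dim \CC$) shows they are isomorphisms of $\mathfrak{sl}(L)$-modules.

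Next I would compute the arrows from degree zero to degree one by decomposing the $\mathfrak{so}(11)$-action on $Q_0$. Using $\mathfrak{so}(11) = \wedge^2 V = \mathfrak{gl}(L) \oplus \wedge^2 L \oplus \wedge^2 L^\vee \oplus L \oplus L^\vee$ and recalling that on $S = \wedge^\bullet L^\vee$ elements of $\wedge^2 L$ (resp.\ $L$) act by double (resp.\ single) contraction while elements of $\wedge^2 L^\vee$ (resp.\ $L^\vee$) act by double (resp.\ single) wedging, we see that $\wedge^2 L$, $L$, and $\mathfrak{sl}(L)$ all annihilate the scalar $Q_0$, while $\wedge^2 L^\vee$, $L^\vee$, and $\mathfrak{gl}(1)$ map isomorphically onto $\wedge^2 L^\vee$, $\wedge^1 L^\vee$, and $\wedge^0 L^\vee$ respectively. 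This establishes the remaining arrows and confirms the table.

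From here, cohomology is purely by inspection. In degree two, only $L^\vee$ survives (the summands $\CC$ and $L$ being killed by $\wedge^5 L^\vee$ and $\wedge^4 L^\vee$), and in degree one, the summands $\wedge^0, \wedge^1, \wedge^2 L^\vee$ are exact, $\wedge^4$ and $\wedge^5 L^\vee$ inject into degree two, leaving only $\wedge^3 L^\vee$. Using a choice of trivialization of $\det L$, the identifications $\wedge^3 L^\vee \cong \wedge^2 L$ and $L^\vee \cong \wedge^4 L$ yield the claimed form of $H^{>0}(\mathfrak{p}_Q)$, and in degree zero, the kernel of $\ad_{Q_0}$ is precisely $\mathfrak{sl}(L) \oplus \wedge^2 L \oplus L$.

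The last step, which I expect to be the main bookkeeping obstacle, is to identify the induced bracket on $H^\bullet(\mathfrak{p}_Q)_{>0}$ with the wedge product. The bracket $\wedge^3 L^\vee \otimes \wedge^3 L^\vee \to V$ from the full eleven-dimensional supertranslation algebra, computed via $\langle Q_3, \Gamma^a Q_3'\rangle$, vanishes on both the $L$- and $\CC$-components for form-degree reasons (the pairing $\wedge^3 L^\vee \otimes \wedge^{4} L^\vee \to \wedge^7 L^\vee$ and $\wedge^3 L^\vee \otimes \wedge^3 L^\vee \to \wedge^6 L^\vee$ are zero), so it lands entirely in the surviving $L^\vee$-component and is given by contraction against $L$ composed with the top-form pairing $\wedge^3 L^\vee \otimes \wedge^2 L^\vee \to \wedge^5 L^\vee \cong \CC$. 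Translating this through the $\mathfrak{sl}(L)$-equivariant isomorphisms $\wedge^3 L^\vee \cong \wedge^2 L$ and $L^\vee \cong \wedge^4 L$ (compatible up to the trivialization of $\det L$ absorbed in the spin representation), this pairing is precisely $\wedge^2 L \otimes \wedge^2 L \xto{\wedge} \wedge^4 L$. Since the only other bracket involving $\wedge^3 L^\vee$ would land in a summand of $V$ killed in cohomology, this completes the identification of $H^\bullet(\mathfrak{p}_Q)$ as a graded Lie algebra.
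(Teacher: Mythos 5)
Your proof is correct and supplies the details for the argument the paper only sketches in its general discussion of the holomorphic decomposition (\S 2); the proposition itself is stated in the paper without a displayed proof, so your write-up fills in exactly the steps one is expected to supply: the $\mathfrak{gl}(L)$-decomposition of $\wedge^2 V$, the form-degree constraint $i+j\in\{n-1,n,n+1\}$ isolating $Q_4,Q_5$ as the only nontrivial partners of $Q_0$, and the Schur-type identification of the induced symmetric pairing $\wedge^2 L\otimes\wedge^2 L\to\wedge^4 L$ with the wedge product. One cosmetic slip worth flagging: the degree-zero summand you describe as ``$L$ acting by single contraction'' is really $L\otimes(L^\perp/L)\subset\wedge^2 V$ (the paper's $L\otimes\C$), whose Clifford action on $S$ is contraction composed with the form-degree sign operator from the odd factor; this still annihilates $\wedge^0 L^\vee$, so your conclusion is unaffected.
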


Let us spell out the tautological filtered cdgsa using these coordinates. 
As a commutative graded super algebra
\deq{
  A = \rC^\infty (\CC^5 \times \RR)[\theta_0,\ldots, \theta_5] \otimes \left(\frac{\CC[\lambda_0, \ldots,\lambda_5]}{I}\right) .
}
The explicit coordinate form of the differential is $\cD = \sum \lambda_i \cD_{5-i}$ where 
\begin{align*}
\cD_0 & = \frac{\partial}{\partial \theta_5} - \theta_1 \frac{\partial}{\partial \zbar} - \theta_0\frac{\partial}{\partial t} \\
\cD_1 & = \frac{\partial}{\partial \theta_4} - \theta_2 \frac{\partial}{\partial \zbar} - \theta_1 \frac{\partial}{\partial t} - \theta_0 \frac{\partial}{\partial z} \\
\cD_2 & = \frac{\partial}{\partial \theta_3} - \theta_3 \frac{\partial}{\partial \zbar} - \theta_2 \frac{\partial}{\partial t} - \theta_1 \frac{\partial}{\partial z} \\
\cD_3 & = \frac{\partial}{\partial \theta_2} - \theta_4 \frac{\partial}{\partial \zbar} - \theta_3 \frac{\partial}{\partial t} - \theta_2 \frac{\partial}{\partial z} \\
\cD_4 & = \frac{\partial}{\partial \theta_1} - \theta_5 \frac{\partial}{\partial \zbar} - \theta_4 \frac{\partial}{\partial t} - \theta_3 \frac{\partial}{\partial z} \\
\cD_5 & = \frac{\partial}{\partial \theta_0} - \theta_5 \frac{\partial}{\partial t} - \theta_4 \frac{\partial}{\partial z} 
\end{align*}
The action of the element $Q$ on $A$ is through the operator
\deq{
  \cQ = \pdv{ }{\theta_5} + \theta_1 \pdv{}{\zbar} + \theta_0 \pdv{}{t} .
}

\subsection{The twist of the tautological cdga}
 
In this section, our strategy of proof is slightly different than above. We do not explicitly write a map $\Phi$ of filtered cdgsa's (although we believe it should be possible to construct such a map in this example as well). As such, Theorem~A and Lemma~B are omitted. Completing the characterization of the cohomology of~$A_Q$ thus requires an explicit check that the further differentials in fact agree, but this is straightforward. In earlier examples, the map $\Phi$ induced an isomorphism of spectral sequences, so that no further check was necessary.

As in the previous sections, let $Y_Q$ be the nilpotence variety associated to the graded Lie algebra $H^\bu (\fp_Q)$. 
The following result characterizes the deformed tautological complex $A_Q$ in terms of the Koszul cohomology of $Y_Q$.

\addtocounter{athm}{2}

\begin{alem}
\label{alem11d}
There is an isomorphism of commutative graded super algebras
\[
\Dol(\CC^5 \times \RR) \otimes H^\bu (K(Y_Q)) \cong H^\bu \left(\Gr \left((u)^{-1} A_Q\right) \right)  .
\]
\end{alem}

In the statement of the lemma, we have viewed $\Dol(\CC^5 \times \RR)$ as a cdgsa with {\em zero} differential.

  \begin{proof}
    As always, the $E_1$ page is the tensor product of smooth functions on spacetime with the deformed Koszul cohomology $H^\bu(K_Q)$. 
    In coordinates, $K_Q^\bu$ is given by
    \deq{
     K_Q^\bu (Y) =  \left(\frac{\CC[\lambda]}{I} \otimes \C[\theta][u,u^{-1}] \;,~\; \lambda \frac{\partial}{\partial \theta} + u \pdv{ }{\theta_5} \right).
    }
   The ideal $I$ is generated by the eleven equations
   \deq{
     I = \left\langle \lambda_0 \lambda_4 + \lambda_1 \lambda_3 + \frac{1}{2} \lambda_2^2,
     \lambda_0 \lambda_5 + \lambda_1 \lambda_4 + \lambda_2 \lambda_3,
     \lambda_1 \lambda_5 + \lambda_2 \lambda_4 + \frac{1}{2} \lambda_3^2 \right\rangle.
   }
We consider the auxiliary filtration $\Tilde{F}^\bu K_Q (Y)$. 
Passing to the first page of the induced spectral sequence, which abuts to $H^\bu(K_Q (Y))$, sets $\lambda_5 = - u$, thus effectively inverting $\lambda_5$ and restricting to the complement of the hyperplane section $\lambda_5  =  0$. 
Having done this, we can solve six of the above eleven equations, obtaining
   \deq{
     \lambda_0 = u^{-1} \left( \lambda_1 \lambda_4 + \lambda_2 \lambda_3 \right), \qquad
     \lambda_1 = u^{-1} \left( \lambda_2 \lambda_4 + \frac{1}{2} \lambda_3^2 \right).
   }
   Just as for $\N=(2,0)$ supersymmetry in six dimensions, the holomorphic supercharge does not sit at a singular point of~$Y$, so that the ideal $I$ is nontrivial even after inverting $\lambda_5$. After solving for $\lambda_0$ and~$\lambda_1$, the remaining equations become
   \deq{
     0 =      u^{-1} \left( u^{-1} \left( \lambda_2 \lambda_4 + \frac{1}{2} \lambda_3^2 \right) \lambda_4 + \lambda_2 \lambda_3 \right) \lambda_4 + u^{-1} \left( \lambda_2 \lambda_4 + \frac{1}{2} \lambda_3^2 \right) \lambda_3 + \frac{1}{2} \lambda_2^2.
}
For type  reasons, the terms containing $\lambda_3^3$ and $\lambda_4^3$ vanish, so that the equation reduces to
\deq{
  0 = \frac{1}{2} \lambda_2^2 + u^{-1} \lambda_2 \lambda_3 \lambda_4 + \frac{1}{2} u^{-2} \lambda_3^2 \lambda_4^2 
  =  \frac{1}{2} \left( \lambda_2 + u^{-1} \lambda_3 \lambda_4 \right) ^2.
}
Defining the new coordinate  $\tilde\lambda_2 =  \lambda_2 + u^{-1} \lambda_3 \lambda_4$, we can write the ideal $J$ generated by the remaining equation simply as $J = \langle \tilde\lambda_2^2 \rangle$. In light of this computation, we have
\deq{
  \Tilde{E}_1 = \frac{\C[\tilde\lambda_2]}{J} \otimes \CC[\lambda_3,\lambda_4] [\theta_0,\theta_1,\theta_2,\theta_3,\theta_4][u,u^{-1}].
}
Contracting the acyclic pieces of the differential, we immediately obtain that
\deq{
  H^\bu(K_Q) \cong \C[\theta_0,\theta_1] [u,u^{-1}] \otimes_\C H^\bu(K(Y_Q)),
}
where $K^\bu(Y_Q) = \C[\theta_2,\tilde\lambda_2]/J$ is the Koszul complex of the maximal ideal for the variety defined by the ideal $J$ . 
  \end{proof}

We point out that the ideal $J$ appearing at the end of the exhibits $Y_Q$ as the affine cone over the Grassmannian $\Gr(2,5)$.

\subsection{Cohomology of $Y_Q$}\label{ssec:koszul}

Let $z$ be a holomorphic coordinate on $\CC^5$ and $t$ a real coordinate on $\RR$. 
By Lemma \ref{alem11d} the cohomology of the associated graded of $(u)^{-1} A_Q$ is given by 
\beqn\label{eqn:11dsuperfield}
\rC^\infty(\CC^5 \times \RR) [\theta_0,\theta_1] \otimes H^\bu (K(Y_Q))[u,u^{-1}] 
\eeqn
where $\theta_0$ plays the role of the one-form $\d t$ and $\theta_1$ plays the role of the one-form $\d \zbar$. 
We first recall a description of the Koszul cohomology of $Y_Q$ (being the affine cone over ${\rm Gr}(2,5)$ this is a standard result). 
We remark that the cohomology is bigraded by $(\lambda,\theta)$ degree. 
Then, we describe the differential present at the next page of the spectral sequence abutting to the cohomology of $A_Q$.

Let $\alpha = -\theta_2 \Tilde{\lambda}_2 \in L$.
The following is a complete description of the cohomology of $K^\bu(Y_Q)$. 
\begin{itemize}
\item In degree zero the cohomology is one-dimensional spanned by ~$1$. 
\item In degree $(1,1)$ the cohomology is five-dimensional spanned by $\alpha$. 
\item In degree $(1,2)$ the cohomology is five-dimensional spanned by $\alpha \theta_2$. 
\item In degree $(2,3)$ the cohomology is one-dimensional spanned by $\alpha^2 \theta_2$. 
\end{itemize}

The remaining differential acting on the graded algebra \eqref{eqn:11dsuperfield} coming from the spectral sequence abutting to $A_Q$ is of the form
\beqn\label{eqn:11diff1}
\Tilde{\d} = - \Tilde{\lambda}_2 \theta_2 \frac{\partial}{\partial z} + u \theta_1 \frac{\partial}{\partial \zbar} + u \theta_0 \frac{\partial}{\partial t}.
\eeqn
From this, we can characterize the action of the differential on each of the classes above.
Let $f = f(z,\zbar, t , \theta_0, \theta_1)$ denote an element of $\rC^\infty(\CC^5 \times \RR)[\theta_0,\theta_1]$.
Then
\beqn\label{eqn:11diff2}
\Tilde{\d} (f) = (\partial_z f) \alpha + u (\partial_{\zbar} f) \theta_1 + u(\partial_t \alpha) \theta_0 .
\eeqn
The value of the differential on the other generators is determined by the Leibniz rule.

\section{Ten-dimensional $(2,0)$ supersymmetry} \label{sec:IIB}
\subsection{Details on ten-dimensional $\N=(2,0)$ supersymmetry}
The $\N$-extended chiral supersymmetry algebra in ten dimensions is constructed by choosing an $\N$-dimensional vector space $R$ with a symmetric bilinear pairing. The $R$-symmetry is then $\lie{so}(R)$. We will be interested in the case $\N=(2,0)$, relevant for type IIB supergravity; the supertranslation algebra then takes the form
\deq{
  \lie{t} = (S_+ \otimes R)[-1] \oplus V[-2].
}
Holomorphic supercharges are of rank one with respect to the tensor product decomposition; we can fix a holomorphic supercharge by specifying a maximal isotropic subspace $L\subset V$, together with a decomposition $R = \rho \oplus \rho^\vee$. The supersymmetry algebra then decomposes as 
\deq{
\left( \wedge^\text{even}(L^\vee) \otimes \left( \rho \oplus \rho^\vee \right) \right)[-1] \oplus \left( L \oplus L^\vee \right)[-2].
}
The brackets in the algebra then take the form
\deq{
  [Q_0^\pm, Q_4^\mp] = [Q_2^+, Q_2^-] = \bar{P}, \qquad
  [Q_2^\pm, Q_4^\mp] = P,
}
where the $\pm$ superscripts denote homogeneous decomposable elements in $\rho$ or~$\rho^\vee$ respectively.
From here, it is straightforward to see that the defining equations of the nilpotence variety are 
\deq[eq:I-IIB]{
  \lambda_1^- \lambda_5^+ + \lambda_1^+ \lambda_5^- + \lambda_3^+ \lambda_3^- = 0, \qquad
  \lambda_1^+ \lambda_3^- + \lambda_1^- \lambda_3^+ = 0.
}

The dg Lie algebra $\lie{p}_Q$ is described as follows,
see \cite[Lemma 6.5.1]{CLsugra}. 

\begin{prop}
As a cochain complex, the dg Lie algebra $\fp_Q$ is
  \[
    \begin{tikzcd}[row sep = 1 ex]
    0 & 1 & 2 \\ \hline \\ 
    \wedge^2 L & \wedge^4 L^\vee \otimes \rho^\vee \ar[r] & L\\
    \lie{sl}(L)  & \wedge^4 L^\vee \otimes \rho & L^\vee \\
    & \wedge^2 L^\vee \otimes \rho^\vee \\
  \wedge^2 L^\vee \ar[r] & \wedge^2 L^\vee\otimes \rho \\
    \lie{gl}(1) \ar[r] & \wedge^0 L^\vee \otimes \rho \\
    \lie{gl}(\rho) \ar[r] & \wedge^0 L^\vee \otimes \rho^\vee
  \end{tikzcd}
\]
The positively graded piece is isomorphic to
\deq{
  \left( L \oplus \wedge^2 L^\vee \right) [-1] \oplus L^\vee[-2],
}
where the bracket is just given by contraction. 
The full $H^\bu(\fp_Q)$ is extended by the same parabolic Lie algebra as in the $\N=(1,0)$ case, namely $\lie{sl}(L) \oplus \wedge^2 L$, sitting in degree zero.
\end{prop}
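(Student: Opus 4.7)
The plan is to identify $\lie{p}_Q = (\lie{p}, \ad_Q)$ as a cochain complex summand-by-summand, decomposing each graded piece of $\lie{p}$ under the stabilizer $\fgl(L) \times \fgl(\rho)$ of the chosen $Q \in \wedge^0 L^\vee \otimes \rho$, and evaluating $\ad_Q$ using the explicit brackets recalled just before the statement. This follows the uniform template used for the preceding propositions of Sections~\ref{sec:smooth} and~\ref{sec:6d}.

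First, I would decompose the graded vector space. In degree zero, $\lie{p}_0 = \lie{so}(V) \oplus \lie{so}(R)$ splits under $\fgl(L) \times \fgl(\rho)$ as $\wedge^2 L \oplus \lie{sl}(L) \oplus \fgl(1) \oplus \wedge^2 L^\vee$ (from $\lie{so}(V) = \wedge^2 V$) together with the summand $\fgl(\rho)$ coming from the $R$-symmetry; in degree one, $S_+ \otimes R = \wedge^{\mathrm{even}} L^\vee \otimes (\rho \oplus \rho^\vee)$ splits tautologically into the six displayed summands; in degree two, $V = L \oplus L^\vee$. Second, the differential $\ad_Q \colon \lie{p}_0 \to \lie{p}_1$ is determined by the Clifford action on $S_+$ together with the defining action on $R$: $\lie{sl}(L)$ and $\wedge^2 L$ stabilize $Q$ (the former is trace-free on scalars, and the latter acts by double contraction, which annihilates any form degree zero element), while $\wedge^2 L^\vee$, $\fgl(1)$, and $\fgl(\rho)$ each act nontrivially and inject $Q$ into the summand of $\lie{p}_1$ indicated by the diagram. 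Third, the differential $\ad_Q \colon \lie{p}_1 \to \lie{p}_2$ is extracted from the bracket formulas $[Q_0^\pm, Q_4^\mp] = \bar P \in L$ and $[Q_2^\pm, Q_4^\mp] = P \in L^\vee$: since the $R$-pairing couples $\rho$ only to $\rho^\vee$ and $Q$ has form degree zero, form-degree selection rules for the Clifford pairing $S_+ \otimes S_+ \to V$ isolate $\wedge^4 L^\vee \otimes \rho^\vee$ as the only summand pairing nontrivially with $Q$, and its image lies in $L$.

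Finally, $H^\bu(\lie{p}_Q)$ follows by taking kernels and cokernels summand-by-summand. In degree zero, $\ker \ad_Q = \lie{sl}(L) \oplus \wedge^2 L$. In degree one, the kernel omits $\wedge^4 L^\vee \otimes \rho^\vee$, and modding out by the three injective images from degree zero leaves $\wedge^4 L^\vee \otimes \rho \oplus \wedge^2 L^\vee \otimes \rho^\vee$; fixing a top form on $L$ to realize $\wedge^4 L^\vee \cong L$ identifies this with $L \oplus \wedge^2 L^\vee$. In degree two, the image lies entirely in $L$, so the cokernel is $L^\vee$. The residual bracket on $H^\bu(\lie{p}_Q)_{>0}$ is inherited from the $\lie{p}$-bracket by passing to representatives and reduces to the canonical contraction $L \otimes \wedge^2 L^\vee \to L^\vee$, as dictated by $[Q_2^+, Q_4^-] = P$. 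The main obstacle is the bookkeeping in the second and third steps: one must verify injectivity of each of the three degree-zero arrows (so that the quotient in degree one is a genuine direct-sum complement) and carefully apply the Clifford and $R$-pairing selection rules to pin down the single degree-one-to-two arrow, but neither of these is conceptually deep once the decomposition is set up.
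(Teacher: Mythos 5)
Your overall plan — decompose $\lie{p}$ under $\fgl(L)\times\fgl(\rho)$ and evaluate $\ad_Q$ summand by summand — is indeed the intended argument (none is spelled out in the paper, but it is the same strategy used for the analogous propositions in Sections~\ref{sec:smooth}--\ref{sec:11d}). The problem is that in the second step you do not actually compute the degree-zero arrows; you assert that ``$\wedge^2 L^\vee$, $\fgl(1)$, and $\fgl(\rho)$ each act nontrivially and inject $Q$ into the summand of $\lie{p}_1$ indicated by the diagram.'' For $\fgl(\rho)$ that step fails. Here $\lie{so}(R)\cong\fgl(\rho)$ is one-dimensional (there is no raising or lowering operator in $\lie{so}(2)$), so every element of $\fgl(\rho)$ preserves the decomposition $R=\rho\oplus\rho^\vee$ and acts on $\rho$ by a scalar. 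Applied to $Q_0^+\in\wedge^0 L^\vee\otimes\rho$, the image therefore lies in $\wedge^0 L^\vee\otimes\rho$, not in $\wedge^0 L^\vee\otimes\rho^\vee$; the arrow $\fgl(\rho)\to\wedge^0 L^\vee\otimes\rho^\vee$ drawn in the statement cannot be $\fgl(\rho)$-equivariant, since it would carry $\rho$-weight $-2$. Thus $\fgl(1)$ and $\fgl(\rho)$ both map into the same one-dimensional space $\wedge^0 L^\vee\otimes\rho$, and the total image of $\ad_Q\colon\lie{p}_0\to\lie{p}_1$ has rank $11$, not $12$.

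Carrying out the calculation honestly therefore produces an additional $\lie{u}(1)$ in $H^0(\lie{p}_Q)$ (the kernel of the two-to-one map $\fgl(1)\oplus\fgl(\rho)\to\wedge^0 L^\vee\otimes\rho$) and an additional one-dimensional summand $\wedge^0 L^\vee\otimes\rho^\vee$ in $H^1(\lie{p}_Q)$. This is exactly the phenomenon that \emph{is} recorded in the paper's six-dimensional $\N=(2,0)$ proposition, where the same two maps share a target and the statement explicitly includes ``an additional $\lie{gl}(1)$ factor whose weights are determined by the kernel condition''; in 6d there is also an arrow to $\wedge^0 L^\vee\otimes\rho^\vee$, but it is supplied by the genuine lowering operator $(\rho^\vee)^{\otimes 2}\subset\lie{sp}(R)$, which has no analogue inside $\lie{so}(2)$. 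A quick sanity check is the Euler characteristic graded by $w=(\rho\text{-weight})-(\text{cohomological degree})$, which $\ad_Q$ preserves: for $\lie{p}_\bu$ one finds $\chi_{w=0}=46-16=30$ and $\chi_{w=-2}=-16+10=-6$, matching the answer with the extra $\lie{u}(1)$ and $\C$ ($35-5=30$, $-11+5=-6$) but not the one stated in the proposition ($34-5=29$, $-10+5=-5$). So rather than confirming the statement, a correct execution of your own program reveals a discrepancy. (The discrepancy is ultimately harmless downstream: the extra pair $\lie{u}(1)\subset H^0$, $\C\subset H^1$ has trivial brackets and contributes an acyclic Koszul pair to the twisted complex, which is exactly why $\lambda_5^+$ and $\theta_5^+$ are contracted away in the Lemma~\ref{alemIIB} computation. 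But a proof of the proposition as literally stated has to account for these summands rather than read them off the diagram.)
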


Let us spell out the tautological filtered cdgsa using these coordinates. 
As a commutative graded superalgebra,
\deq{
  A = \rC^\infty (\CC^5)[\theta^\pm] \otimes \left(\frac{\CC[\lambda^\pm ]}{I}\right) .
}
where $\theta^\pm = (\theta_1^\pm,\theta_3^\pm, \theta_5^\pm)$ and similarly for the $\lambda$ coordinates. 
The ideal $I$ is defined by the equations in~\eqref{eq:I-IIB}.
The explicit coordinate form of the differential is
\begin{multline}
  \cD = \lambda_5^+ \left(\pdv{ }{\theta_5^+} - \theta_1^- \pdv{ }{\zbar}\right) + \lambda_3^+ \left(\pdv{ }{\theta_3^+} - \theta_3^- \pdv{ }{\zbar} - \theta_1^- \pdv{ }{z}\right) + \lambda_1^+ \left(\pdv{ }{\theta_1^+} - \theta_5^- \pdv{ }{\zbar} - \theta_3^- \pdv{ }{z}\right) \\
+ \lambda_5^- \left(\pdv{ }{\theta_5^-} - \theta_1^+ \pdv{ }{\zbar}\right) + \lambda_3^- \left(\pdv{ }{\theta_3^-} - \theta_3^+ \pdv{ }{\zbar} - \theta_1^+ \pdv{ }{z}\right) + \lambda_1^- \left(\pdv{ }{\theta_1^-} - \theta_5^+ \pdv{ }{\zbar} - \theta_3^+ \pdv{ }{z}\right) .
\end{multline}
The action of the element $Q_0^+$ on $A$ is through the operator
\deq{
  \cQ = \pdv{ }{\theta_5^-} + \theta_1^+ \pdv{ }{\zbar} .
}

\subsection{The twist of the tautological cdga}

In this section our strategy is similar to Section \ref{sec:11d}. 
The following lemma gives a description of the cohomology of the twisted tautological complex $A_Q$. 

\addtocounter{athm}{2}

\begin{alem}
\label{alemIIB}
There is an isomorphism of commutative graded super algebras
\[
  \Dol(\CC^5) \otimes H^\bu (K(Y_Q)) \cong H^\bu \left(\Gr \left((u)^{-1} A_Q\right) \right).
\]
\end{alem}
In the statement of the lemma, we have viewed $\Dol(\CC^5 \times \RR)$ as a cdgsa with {\em zero} differential.
\begin{proof}
  As is by now familiar, we can ignore the  factor  of $C^\infty(\R^{10})$ and simply study the cohomology of~$K^\bu_Q(Y)$, which we compute by  means of the auxiliary filtration  $\Tilde F^\bu K_Q(Y)$. The first differential  has the effect of  setting $\lambda_5^- = - u$ on the $\Tilde E_1$ page,  and thus effectively inverting $\lambda_5^-$ and restricting to the complement of the corresponding hyperplane section. We can then solve five of  the equations  in~\eqref{eq:I-IIB}, obtaining
  \deq{
    \lambda_1^+ = u^{-1} \left( \lambda_1^- \lambda_5^+ + \lambda_3^+ \lambda_3^- \right).
  }
  Substituting this into the other equation produces
  \deq{
  u^{-1} \left( \lambda_1^- \lambda_5^+ + \lambda_3^+ \lambda_3^- \right) \lambda_3^- + \lambda_1^- \lambda_3^+ = 0.
  }
We observe that the  $(\lambda_3^-)^2\lambda_3^+$ term vanishes for  type reasons, so  that  we can simplify this equation as
  \deq{
    \lambda_1^- \left(  \lambda_3^+ + u^{-1} \lambda_5^+ \lambda_3^-  \right) = 0.
  }
  Defining $\tilde\lambda_3^+ = \lambda_3^+ + u^{-1} \lambda_5^+ \lambda_3^-$ and contracting the additional acyclic differentials we see can write the ideal $J$ generated by the remaining equations as 
  \[
  J = \<\lambda_1^- \tilde\lambda_3^+\>.
  \]
  We conclude that
  \[
  H^\bu(K_Q(Y)) \cong \CC[\theta_1^+] \otimes H^\bu(K(Y_Q))
  \]
  where $K(Y_Q) = \CC[\theta_3^+, \theta_1^-, \lambda_1^-, \Tilde{\lambda}_3^+] / J$ is the Koszul complex of the maximal ideal for the variety defined by the ideal $J$. 
\end{proof}

\subsection{Cohomology of $Y_Q$}
\label{sec:IIBkoszul}
By Lemma \ref{alemIIB} the cohomology of the associated graded of $(u)^{-1} A_Q$ is given by 
\beqn\label{eqn:IIBsuperfield}
\rC^\infty(\CC^5) [\theta_1^+] \otimes H^\bu (K(Y_Q))[u,u^{-1}] .
\eeqn
As above, since $Y_Q$ is cut out by homogenous equations its Koszul cohomology is bigraded with respect to $(\lambda, \theta)$ degree.

Let $\alpha = - \lambda_1^- \theta_3^+  - \Tilde{\lambda}_3^+\theta_1^-$. 
Representatives for the cohomology of this resolution are given as follows
\begin{itemize}
\item In degree $(0,0)$ the cohomology is one-dimensional with generator $1$. 
\item In degree $(1,1)$ the cohomology is five-dimensional with generator $\alpha$.  
\item In degree $(1,2)$ the cohomology is one-dimensional generated by $\theta_1^- \alpha$. 
\item In degree $(2,2)$ the cohomology is eleven-dimensional. 
One generator is given by $\Tilde{\lambda}_3^+ \theta_3^+ \alpha$. 
The other ten generators are $\alpha^2$.
\item In degree $(2,3)$ the cohomology is ten dimensional. 
Five generators are given by $\theta_1^- \alpha^2$.
Five generators are given by $\theta_3^+ \alpha^2$. 
\item In degree $(2,4)$ the cohomology is one-dimensional with generator $(\theta_1^-)^2 \alpha^2$.
\item In degree $(3,4)$ the cohomology is one-dimensional with generator $\theta_3^+ \alpha^3$. 
\end{itemize}

The remaining differential acting on the graded algebra \eqref{eqn:IIBsuperfield} coming from the spectral sequence abutting to $A_Q$ is of the form
\beqn\label{eqn:IIBdiff1}
\Tilde{\d} = - \lambda_1^- \theta_3^+ \frac{\partial}{\partial z} - \Tilde{\lambda}_3^+ \theta_1^- \frac{\partial}{\partial z} + u\theta_1^+ \frac{\partial}{\partial \zbar} .
\eeqn
From this, we can characterize the action of the differential on each of the classes above.
Let $f = f(z,\zbar, \theta_1^+)$ denotes an element of $\rC^\infty(\CC^5)[\theta_1^+]$.
Then
\beqn\label{eqn:IIBdiff2}
\Tilde{\d} (f) = (\partial_z f) \alpha + u (\partial_{\zbar} f) \theta_1 .
\eeqn
The value of the differential on the other generators is determined by the Leibniz rule. 

\section{Applications: twists of supergravity theories and supersymmetric field theories}

In this section, we recapitulate our results and reformulate them in the context of twists of supersymmetric field theories. 
We have reviewed above that the tautological filtered cdgsa $A$  provides a large free resolution of a particular multiplet over superspace, as in the  pure spinor superfield formalism. 
Only {\em minimal} twists of supersymmetric theories~\cite{CostelloHol, ESW} are relevant for us in this work. 
In what follows, we will indicate which multiplets are modelled by $A$ in each case, and give a component-field description of the twisted theory in the language of complex geometry. We will appeal to some ideas, both from the usual BV formalism and from supersymmetric physics, that were not reviewed in any detail in the remainder of the paper. 

\subsection{Supersymmetric gauge theory}

The results of \S \ref{sec:smooth} can be used to deduce the twists of minimally supersymmetric Yang--Mills theory in dimensions $2n = 4,6,10$. 
Characterizations of these twists are not new to this paper; they have appeared in \cite{Johansen, BaulieuCS, CostelloHol, CostelloYangian, CY4, ESW}, just to name a few references. 

The novel insight in our approach is that we describe the twisted theory in a rather universal way in terms of a {\em commutative} dg algebra. 
Given any Lie group $G$, we obtain the holomorphic twist of the perturbative supersymmetric Yang--Mills multiplet by tensoring with the Lie algebra $\fg$. (Note, however, that this multiplet may either be a BRST or BV multiplet, depending on dimension!)
Let us unwind this construction explicitly. 

Given any Lie algebra $\fg$ and a cdgsa $(B,\d_B)$ one can define a natural super dg Lie algebra structure on the tensor product $\fg \otimes B$. 
The differential is simply $\d = 1_\fg \otimes \d_B$ and the bracket is $[X \otimes a, Y \otimes b] = [X,Y] \otimes (ab)$. 
Furthermore, a map of cdgsa's $f \colon B \to B'$ induces a map of super Lie algebras $f \colon \fg \otimes B \to \fg \otimes B'$. 

Let $A$ be the tautological cdgsa associated to the minimal nilpotence variety in dimensions $4,6$ or $10$.
Following this construction, the tensor product $\fg \otimes A$ has the structure of super dg Lie algebra.
Furthermore, by Theorem \ref{thm:asmooth}, we obtain a quasi-isomorphism of super dg Lie algebras
\[
\Phi \colon (u)^{-1}\, \fg \otimes  \Dol(\CC^n) \xto{\simeq} (u)^{-1}\, \fg \otimes A_Q 
\]
where $A_Q$ is the twisted tautological cdgsa.

The super dg Lie algebra $(u)^{-1} \fg \otimes \Dol(\CC^n)$ can be thought of as a family of super dg Lie algebras over the punctured disk $u \in D^\times$. 
Following Remark~\ref{rmk:regrade} we can regrade this family turning $u$ into an element of $\ZZ \times \ZZ/2$ degree $(0,+)$.
Further, specializing $u=1$ results in the more familiar dg Lie algebra $\fg \otimes \Omega^{0,\bu}(\CC^n)$.
This dg Lie algebra describes the formal moduli space of holomorphic $G$-bundles on $\CC^n$ near the trivial holomorphic $G$-bundle. Our result exhibits this as a deformation of a dg Lie algebra describing the formal moduli space of field configurations in the corresponding supersymmetric gauge theory.

In dimensions four and six, supersymmetry closes off-shell, and the dg Lie algebra $\lie{g}\otimes A$ correspondingly describes the space of all field configurations modulo gauge invariance. In physics language, this is the BRST theory and does not yet contain information about dynamics. On the other hand, supersymmetry in the ten-dimensional theory only closes on the critical locus, so that $\lie{g} \otimes A$ can be given the structure of a full BV theory. For details, we refer the reader to~\cite{EHSW}.

\subsection{The six-dimensional tensor multiplet}

In \S \ref{sec:6d} we studied the twisted pure spinor superfield for six-dimensional $\cN=(2,0)$ supersymmetry. 
We refer back to that section for notation. 
Our analysis gives a description of the minimal twist of the abelian six-dimensional $\cN=(2,0)$ tensor multiplet, as studied using pure spinors in~\cite{CederwallM5}. The resulting multiplet was showed to admit the structure of a presymplectic BV theory by the authors in~\cite{SWtensor}, where the twist was also first computed in components. Comparing the lengths of the corresponding arguments will convince the reader of the efficiency of our approach here! In addition, the approach we use here reveals a hidden commutative algebra structure, analogous to that present in the gauge theory examples above.

Let $A_Q$ be the twisted tautological complex. 
Recall, by Theorem \ref{thm:a6d}, there is a quasi-isomorphism of cdgsa's 
\[
\Phi \colon (u)^{-1} B \xto{\simeq} (u)^{-1} A_Q
\]
where 
\[
  B = \Dol^{\leq 1}(\CC^3) \oplus \left( \Dol(\CC^3) \otimes \Pi R^\circ[-1]  \right) .
\]
Denote elements of $B$ by $(\alpha, \beta)$. 
Denote by $B_c \subset B$ the subalgebra of elements with compact support on $\CC^3$.
The algebra $B_c$ is equipped with a pairing of bidegree $(-5,+)$ defined by
\[
\omega (\alpha + \beta, \alpha' + \beta') = \int_{\CC^3} \alpha \partial \alpha' + \int_{\CC^3} \Omega \, \omega^\circ(\beta, \beta') 
\]
where $\Omega$ is the holomorphic volume form on $\CC^3$. 
In particular, the shift $B_c[2]$ is equipped with a pairing of bidegree $(-1, +)$. 

Following Remark \ref{rmk:regrade}, we can regrade $B[2]$, turning $u$ into an element of 
bidegree $(0,+)$.
Further, specializing $u=1$ results in the super cochain complex
\[
  \Omega^{\leq 1,\bu} (\CC^3)[2] \oplus \left( \Omega^{0,\bu}(\CC^3) \otimes \Pi R^\circ [1] \right).
\]
This is precisely the complex underlying the fields of the minimal twist of the {\em free} $\cN=(2,0)$ tensor multiplet, as found in \cite{SWtensor}.
The pairing $\omega$ endows this cochain complex with the structure of a {\em presymplectic BV} theory; see {\em loc.~cit.}\ for further details. 
This is one motivation for the seemingly arbitrary choice to consider $B[2]$ rather than some other regrading.

Because we have shifted the algebra $B $ by {\em two} rather than one, there is no obvious way to tensor with a Lie algebra to obtain an interacting BRST theory, like we did for the twist of supersymmetric Yang--Mills theory in the previous subsection.
In other words, simply tensoring $B$ with a Lie algebra clearly does not result in a non-trivial theory of a (twisted) non-abelian tensor multiplet. 

We nevertheless suspect that this algebra structure is involved in describing the gauge symmetries in the twist of the non-abelian tensor multiplet.
We leave a detailed investigation of this for future work, but remark that 
we expect this commutative structure to be combined in interesting fashion with objects in higher gauge theory studied in the theory of Lie $2$-algebras \cite{Saemann1, BaezHuerta}. 

\subsection{Eleven-dimensional supergravity}
\label{ssec:11dsugra}

In this case, it is known that $A[3]$, the shift by three of the tautological cdgsa resolves the untwisted eleven-dimensional multiplet~\cite{Ced-11d}.
As such, the results of \S\ref{sec:11d} provide a description of the minimal twist of eleven-dimensional supergravity. 
We emphasize that we only characterize the {\em free limit} of the twisted theory;
a full description of the interacting theory will be presented in forthcoming work with Surya Raghavendran. 

In \S \ref{ssec:koszul} we have characterized the spectral sequence abutting to the cohomology of $A_Q$ in terms of pure spinor variables. 
We unpack this using complex geometry, as we have done in the previous examples. 

Let $X$ be a Calabi--Yau $5$-fold and let $T_X, T^*_X$ denote the holomorphic tangent and cotangent bundles. 
Denote by 
\begin{align*}
  \Omega^{i,j} & = \Gamma\left(X,  \wedge^i T^*_X \otimes \wedge^j \Bar{T}_X^* \right), \\
\PV^{i,j} & = \Gamma\left(X,  \wedge^i T_X \otimes \wedge^j \Bar{T}_X^* \right)
\end{align*}
the spaces of type $(i,j)$ Dolbeault and polyvector fields, respectively. 
Also, let $\Omega^i$ be the space of (complex valued) smooth $i$-forms on $\RR$. 
Below, we will use the completed tensor product $\Hat{\otimes}$ with respect to the natural topology on spaces of sections; it has the property that 
\[
\Omega^{i,j} \, \Hat{\otimes} \, \Omega^k = \Gamma\left(\CC^5 \times \RR \, , \, \wedge^i T^*_{\CC^5} \otimes \wedge^j \Bar{T}_{\CC^5}^* \otimes T^*_\RR \right)
\]
and similarly for polyvector fields. 

By Lemma~\ref{alem11d}, the cohomology of $(u)^{-1} \Gr A_Q$ is given by 
\beqn
\rC^\infty(\CC^5 \times \RR) [\theta_0,\theta_1] \otimes H^\bu (K(Y_Q))[u,u^{-1}] .
\eeqn
We identify $\theta_0$ with the one-form $\d t$ and $\theta_1$ with the one-form $\d \zbar$. 
We denoted by $\Tilde{\d}$ the differential acting on the cohomology of this associated graded complex. 

Following Remark \ref{rmk:regrade}, we can regrade so as to turn $u$ into an element of bidegree $(0,+)$.
We then further specialize $u=1$ just as before. 
The differential $\Tilde{\d}$ from Equation \eqref{eqn:11diff1} is of degree $(1,+)$. 
This results in the $\ZZ \times \ZZ/2$ graded vector space
\[
  \rC^\infty(\CC^5 \times \RR) [\d t,\d \zbar] \otimes H^\bu(K(Y_Q))[3]. 
\]
We recall that the cohomology of $K(Y_Q)$ is graded by both $\lambda$- $\theta$-degree; we will correspondingly write $H^{i,j}$. Note, however, that the appropriate cdgsa bidegree for such an element is $(i,i+j\mod 2)$. We have recalled the generators of the Koszul homology in \S \ref{ssec:koszul}: 
\begin{itemize}
  \item $H^{0,0}(K(Y_Q))$ is one-dimensional, spanned by $1$.
We therefore have an $SL(5)$-equivariant identification 
\[
  \rC^\infty(\CC^5 \times \RR) [\d t,\d\zbar] \otimes H^{0,0} \cong \Omega^{0,\bu}(\CC^5) \, \Hat{\otimes} \, \Omega^\bu(\RR) .
\]
The overall bidegree is $(-3,+)$.
\item $H^{1,1}(K(Y_Q))$ is five-dimensional, spanned by $\alpha$.
We have an $SL(5)$ identification 
\[
  \rC^\infty(\CC^5 \times \RR) [\d t, \d\zbar] \otimes H^{1,1} \cong \Omega^{1,\bu}(\CC^5) \, \Hat{\otimes} \, \Omega^\bu(\RR) .
\]
The overall bidegree is $(-2, +)$.
\item $H^{1,2}(K(Y_Q))$ is five-dimensional, spanned by $\theta_2 \alpha$.
  As an $SL(5)$ representation,
\[
  \rC^\infty(\CC^5 \times \RR) [\d t, \d\zbar] \otimes H^{1,2} \cong \PV^{1,\bu}(\CC^5) \, \Hat{\otimes} \, \Omega^\bu(\RR) .
\]
The overall bidegree is $(-2,-)$.
\item $H^{2,3}(K(Y_Q))$ is one-dimensional, spanned by $\theta_2 \alpha^2$.
We have an $SL(5)$ identification 
\[
  \rC^\infty(\CC^5 \times \RR) [\d t,\d\zbar] \otimes H^{2,3} \cong \PV^{0,\bu}(\CC^5) \, \Hat{\otimes} \, \Omega^\bu(\RR) .
\]
The overall bidegree is $(-1,-)$.
\end{itemize}

From the characterization of the differential $\Tilde{\d}$ in Equation \eqref{eqn:11diff2} we obtain the following cochain complex
\beqn\label{eqn:11dsugradiagram}
\begin{tikzcd}[row sep = 4 pt]
  \ul{-3} & \ul{-2} & \ul{-1}  \\
& \Pi \PV^{1,\bu} \, \Hat{\otimes} \, \Omega^\bu \ar[r,"\partial_\Omega"] & \Pi \PV^{0,\bu} \, \Hat{\otimes} \, \Omega^\bu  \\
\Omega^{0,\bu} \, \Hat{\otimes} \, \Omega^\bu  \ar[r, "\partial"] & \Omega^{1,\bu} \, \Hat{\otimes} \, \Omega^\bu   .
\end{tikzcd}
\eeqn
where $\partial_\Omega$ denotes the holomorphic divergence operator with respect to the flat Calabi--Yau structure on $\CC^5$ and $\partial$ is the holomorphic de Rham operator. 
The $\dbar$ and $\d_t$ operators are left implicit, as always.

The theory admits a nondegenerate pairing of degree $(-2,-)$. As such, it does not straightforwardly admit the structure of a BV theory. However, we note that it is a $\Z/2\Z$-graded BV theory upon totalizing the cdgsa bidegree. 

\subsection{Type IIB supergravity}
Motivated by the topological string, Costello and Li have laid out a series of conjectures for twists for Type IIB and Type IIA supergravity theories \cite{CLsugra} in terms of Kodaria--Spencer theory \cite{BCOV}. 
The results in \S \ref{sec:IIB} confirm their conjecture for the free limit of the minimal twist of Type IIB supergravity on $\CC^5$. 
In the terminology of \cite{CLsugra} we find the free limit of the ``minimal'' form of Kodaira--Spencer theory. 

To match up with \cite{CLsugra} it is necessary to consider the overall cohomological shift of the tautological complex by four $A[4]$. 
We remark on some consequences of this choice at the end of the subsection.
By Lemma~\ref{alem11d}, the cohomology of $(u)^{-1} \Gr A_Q[4]$ is given by 
\beqn
\rC^\infty(\CC^5 \times \RR) [\theta_1^+] \otimes H^\bu (K(Y_Q))[u,u^{-1}] [4] .
\eeqn
We identify $\theta_1^+$ with the one-form frame $\d \zbar$.
We denoted by $\Tilde{\d}$ the differential acting on the cohomology of this associated graded complex. 

Following Remark \ref{rmk:regrade}, we can regrade so as to turn $u$ into an element of bidegree $(0,+)$.
We then further specialize $u=1$ just as before. 
The differential $\Tilde{\d}$ from Equation \eqref{eqn:11diff1} is of degree $(1,+)$. 
This results in the $\ZZ \times \ZZ/2$ graded vector space
\beqn\label{eqn:IIBregrade}
  \rC^\infty(\CC^5 \times \RR) [\d \zbar] \otimes H^\bu(K(Y_Q))[4]. 
\eeqn
We recall that the cohomology of $K^\bu(Y_Q)$ is graded by both $\lambda$- $\theta$-degree; we will correspondingly write $H^{i,j}$. Note, however, that the appropriate cdgsa bidegree for such an element is $(i,i+j\mod 2)$. We have listed the generators of the Koszul homology in \S \ref{sec:IIBkoszul}. 

Proceeding to make $SL(5)$ identifications as we did in the previous subsection, we see that \eqref{eqn:IIBregrade} equipped with the differential $\Tilde{\d}$ takes the following form:
\[
  \begin{tikzcd}[row sep = 4 pt]
\ul{-4} & \ul{-3} & \ul{-2} & \ul{-1}  \\
& & \PV^{0,\bu} \\
& & \Pi \PV^{1,\bu} \ar[r, "\partial_\Omega"] & \Pi \PV^{0,\bu} \\
\Omega^{0,\bu} \ar[r, "\partial"] & \Omega^{1,\bu} \ar[r, "\partial"] & \Omega^{2,\bu} &  & \\
& \Pi \Omega^{0,\bu} \ar[r, "\partial"] & \Pi \Omega^{1,\bu} & & \\
& & \Omega^{0,\bu}
\end{tikzcd}
\]
Here, we have used the characterization of the differential $\Tilde{\d}$ in Equation \eqref{eqn:IIBdiff2}.

This complex admits a (degenerate) pairing of degree $(-1, +)$; this is the reason for the seemingly arbitrary overall shift by four. 
Away from the middle line, this is the obvious pairing granted by the Calabi--Yau structure. 
On the middle line the pairing is defined by 
\[
\int_{\CC^5} \alpha \partial \alpha'
\]
for $\alpha, \alpha' \in \Omega^{2,\bu}$. 
This results in the structure of a presymplectic BV theory \cite{SWtensor}. 

Notice that the last three lines of the above complex are of the form $\Omega^{\leq i, \bu}$; this is the complex resolving the sheaf of holomorphic closed $i$-forms on $\CC^5$.
The $\partial$-operator defines a map
\[
\partial \colon \Omega^{\leq i, \bu} \to \Omega^{>i , \bu} \cong \PV^{\leq 5-i, \bu}
\]
where the isomorphism makes use of the Calabi--Yau structure. 
If one replaces $\Omega^{\leq i, \bu}$ in the complex with $\PV^{\leq 5-i, \bu}$ one finds precisely the description given in \cite{CLsugra}.\footnote{We have a $\ZZ/2 \times \ZZ$ graded complex which is broken to the totalized $\ZZ/2$ grading in the presence of the BCOV interaction.}
This complex is no longer is equipped with a pairing, but is equipped with a degree $(-1,+)$ Poisson bivector.
This bivector produces the so-called BCOV propagator which Costello and Li use to study the string field theory quantization of the topological B-model \cite{CLbcov1}. 

\subsection{Type IIA supergravity}
Physically, Type IIA supergavity is related to eleven-dimensional supergavity by circle compactification. 
Using this intuition and the discussion in \S \ref{ssec:11dsugra}, we provide a conjectural description of the free limit of the {\em minimal} twist of Type IIA supergravity on $\CC^5$. 

We compactify along a circle in the topological direction. 
This amounts to replacing the de Rham complex $\Omega^\bu$ in \eqref{eqn:11dsugradiagram} by the de Rham cohomology of the circle $\CC \oplus \Pi \CC$, where the odd generator is the volume form on $S^1$. 
The result is
\[
  \begin{tikzcd}[row sep = 4 pt]
\ul{-3} & \ul{-2} & \ul{-1} &  \\
& \Pi \PV^{1,\bu}  \ar[r,"\partial"] & \Pi \PV^{0,\bu}   \\
&  \PV^{1,\bu}  \ar[r,"\partial"] & \PV^{0,\bu}   \\
\Omega^{0,\bu}  \ar[r, "\partial"] & \Omega^{1,\bu}  \\
\Pi \Omega^{0,\bu}  \ar[r, "\partial"] & \Pi \Omega^{1,\bu}  .
\end{tikzcd}
\]

\subsection{Type I supergravity}

Finally, we provide a speculative description of the minimal twist of the minimal twist of Type I supergravity. 
We obtain this simply by comparing the field content in the Type IIA and Type IIB models above. 
We observe that these complexes share the subcomplex:
\[
  \begin{tikzcd}[row sep = 4 pt]
\ul{-3} & \ul{-2} & \ul{-1} &  \\
& \Pi \PV^{1,\bu}  \ar[r,"\partial_\Omega"] & \Pi \PV^{0,\bu}   \\
\Pi \Omega^{0,\bu}  \ar[r, "\partial"] & \Pi \Omega^{1,\bu}  .
\end{tikzcd}
\]
The bottom line maps via $\partial$ to the complex 
\[
\Pi \PV^{3, \bu} \to \Pi \PV^{2, \bu} \to \cdots \to \Pi \PV^{0,\bu}
\]
Upon doing this, one finds a match with Costello and Li's conjectural description of Type I supergravity which they arrived at from studying the unoriented version of the $B$-model~\cite{CLtypeI}.
\printbibliography

\end{document}